\newtheorem{thm}{Theorem}
\newtheorem{lem}{Lemma}
\newtheorem{exmp}{Example}
\newtheorem{mec}{Mechanism}
\newtheorem{coro}{Corollary}
\newtheorem{defi}{Definition}
\title{Utility Maximizer or Value Maximizer: \\ Mechanism Design for Mixed Bidders in Online Advertising}
\author{
    Hongtao Lv,\textsuperscript{\rm 1,2} 
    Zhilin Zhang,\textsuperscript{\rm 3} 
    Zhenzhe Zheng,\textsuperscript{\rm 2}\thanks{Z. Zheng is the corresponding author.}
    Jinghan Liu,\textsuperscript{\rm 4} \\
    Chuan Yu,\textsuperscript{\rm 3} 
    Lei Liu,\textsuperscript{\rm 1} 
    Lizhen Cui,\textsuperscript{\rm 1} 
    and Fan Wu\textsuperscript{\rm 2}
}
\begin{document}

\maketitle

\begin{abstract}
Digital advertising constitutes one of the main revenue sources for online platforms. In recent years, some advertisers tend to adopt auto-bidding tools to facilitate advertising performance optimization, making the classical \emph{utility maximizer} model in auction theory not fit well. Some recent studies proposed a new model, called \emph{value maximizer}, for auto-bidding advertisers with return-on-investment (ROI) constraints. However, the model of either utility maximizer or value maximizer could only characterize partial advertisers in real-world advertising platforms. In a mixed environment where utility maximizers and value maximizers coexist, the truthful ad auction design would be challenging since bidders could manipulate both their values and affiliated classes, leading to a multi-parameter mechanism design problem. In this work, we address this issue by proposing a payment rule which combines the corresponding ones in classical VCG and GSP mechanisms in a novel way. Based on this payment rule, we propose a truthful auction mechanism with an approximation ratio of $2$ on social welfare, which is close to the lower bound of at least $\frac{5}{4}$ that we also prove. The designed auction mechanism is a generalization of VCG for utility maximizers and GSP for value maximizers. 
\end{abstract}

\section{Introduction}

Digital advertising is one of the most successful applications of auction theory, and it serves as a primary source of revenue for online platforms, such as Google, Facebook, Alibaba, and Baidu. In a typical scenario of selling advertising slots, the online platforms conduct ad allocation and compute corresponding payments for advertisers by  Vickrey-Clarke-Grove (VCG) mechanism~\cite{vickrey1961counterspeculation,clarke1971multipart,groves1973incentives} or generalized second price (GSP) auction~\cite{varian2007position,edelman2007internet}. It is widely known that VCG is a truthful mechanism, while GSP, as a more pervasive alternative in industry, is not truthful but has envy-free Nash equilibria, and all such equilibria yield no lower revenue than VCG. 

The existing analysis on VCG and GSP mainly builds upon the quasi-linear utility model, also called as \emph{utility maximizer} (UM), \emph{i.e.}, an advertiser aims to optimize the difference between the value of the allocation and her payment. However, in modern online advertising systems, many advertisers have started to use auto-bidding tools, where they only set high-level constraints specifying a targeted return on investment (ROI) constraint (\emph{i.e.}, a targeted minimum ratio between the obtained value and the payment) under a certain budget \cite{aggarwal2019autobidding,he2021unified,yang2019bid,zhang2014optimal}.
When the ROI constraint is close to one, it could be captured by the \emph{individual rationality} property~\cite{noam2008algorithmic} in the UM model. Nevertheless, when the targeted ROI becomes large, the classical UM model could not capture the behaviors of advertisers in auto-bidding~\cite{szymanski2006impact,cavallo2017sponsored}.
As a consequence, \citeauthor{wilkens2017gsp} \shortcite{wilkens2016mechanism,wilkens2017gsp} proposed another model for advertisers, called \emph{value maximizer} (VM), which specifies the value of allocation as her objective and the payment as the second-order objective. It is empirically shown in their work that, as long as the ROI requirement of an advertiser is moderately high (above 2 or 3), her behavior pattern would be close to that in the VM model. This new model brings about an important theoretical result, \emph{i.e.}, GSP auction becomes a truthful mechanism for VMs, which offers a new perspective on the prevalence of GSP. This result renders the VM model attractive for both the academics \cite{niazadeh2022fast,cavallo2017sponsored} and the industry \cite{wang2021designing} in recent years.

However, either the model of UM or VM characterizes only a part of advertisers, \emph{i.e.}, UMs represent the advertisers with relatively low ROI constraints while VMs are those bidders with relatively high ROI constraints. As the ROI constraints of advertisers would be various in real-world advertising platforms, a natural problem arises: \emph{how to design a truthful mechanism when both classes of UMs and VMs coexist in an advertising system?}
This problem involves two aspects: 1) when the class information is public, \emph{i.e.}, an advertiser could only misreport her value, and 2) when the class information is private, \emph{i.e.}, an advertiser could misreport both her affiliated class and her value. Even in the simpler former case, it is a nontrivial mechanism design problem as the strategic behaviors of advertisers are unclear in this setting, and apparently, the VCG or GSP mechanism would not be truthful. The latter case is more practical since advertisers can easily manipulate their ROI constraints to change the corresponding class if they could benefit. This private class information introduces substantial challenges as the problem falls within the field of multi-parameter mechanism design~\cite{noam2008algorithmic}, which is hard to resolve in general.

In this work, we address the above problems by designing truthful mechanisms for both cases with public classes and private classes. 
We show that the following \underline{M}echanism for mixed bidders with \underline{PU}blic classes (MPU) is truthful with respect to their values and guarantees optimal social welfare: ranking advertisers by their bids for ad allocations, and  then charging UMs following the VCG payment rule and charging VMs following the GSP payment rule. MPU is intuitive but ``unfair" for VMs to some extent since VMs may suffer a higher payment with the same allocation compared with UMs. This implies that MPU would be untruthful in the case of private classes. Therefore, we further propose a new \underline{M}echanism for mixed bidders with \underline{PR}ivate classes (MPR). The key idea of MPR is to specify a payment for each slot instead of each advertiser, which takes the maximum of the VCG-style payment induced from the closest lower UM and the GSP-style payment induced from the closest lower VM. With this payment rule, MPR conducts the ad allocation by filling the slots with all VMs in a bottom-up manner, and then iteratively assigning a slot with the best utility for a UM. We prove that this mechanism is truthful with respect to both the value and class information, and at the same time, it guarantees an approximation ratio of at most $2$ on social welfare. We also prove that no mechanisms could achieve a better approximation ratio than $\frac{5}{4}$.

The main contribution of our work lies in the following two aspects. On the one hand, we are the first to study truthful mechanism design for the hybrid setting with the coexistence of both UMs and VMs, which offers interesting insights into the combination of VCG and GSP mechanisms. Both MPU and MPR reduce to VCG when all bidders are UMs, and become GSP when all bidders are VMs. On the other hand, our work takes a substantial step towards the understanding of mechanism design for bidders with ROI constraints, which has been an emerging topic in recent years \cite{balseiro2021landscape,golrezaei2021auction,li2020incentive}. MPR implies that a truthful mechanism may allocate higher slots to bidders with high ROI constraints (\emph{i.e.}, VMs in our setting), although their bids are lower than those with low ROI constraints.

\section{The Model}

We study the standard model of an advertising auction. There are $K$ ad slots, indexed by $k \in \{1,2,...,K\}$ in a bottom-up manner. The slot $k$ has click-through-rate (CTR) $x_k$, and we assume that $x_K\ge x_{K-1}\ge...\ge x_{1}>0$. We also use $k=0$ with $x_{0}=0$ to denote a dummy slot below the lowest one.
There is a set of bidders $\mathcal{N} = \{1,2,...,n\}$, indexed by $i$, each of which has a private value $v_i$ for a click. Without loss of generality, we assume $n>K$ and $v_i\ne v_j$ for each pair of bidders $i$ and $j$ for ease of presentation.
An auction mechanism $M$ picks an allocation outcome $\Pi=\{\pi_1, \pi_2,...,\pi_K \}$ and charges the price $p_i$ of a click for each bidder $i$, where $\pi_k$ denotes the bidder whose ad is assigned the slot $k$. We also use $a_i$ to denote the index of the slot allocated to bidder $i$, \emph{i.e.}, $a_{i} = k$ if $\pi_{k} = i$. 

We consider the setting where each bidder could be either a \emph{utility maximizer} (UM) or a \emph{value maximizer} (VM). Their definitions are described as follows:
\begin{defi}[Utility Maximizer, UM]
A utility maximizer $i$ strategizes to maximize her utility $u_i = v_i x_{a_i} - p_i x_{a_i}$.
\end{defi}
\begin{defi}[Value Maximizer, VM \cite{wilkens2017gsp}]
A value maximizer $i$ strategizes to maximize her objective $u_i = v_i x_{a_i}$ while keeping the payment $p_i\le v_i$; among outcomes with equal objective, a lower price is preferred.
\end{defi}
\noindent
Intuitively, a UM follows the standard model of bidders in online advertising, while a VM prioritizes her obtained allocation over her payment. For ease of presentation, we also call the objective $u_i$ of a VM as her ``utility".
We use $\tau_i \in \{UM, VM\}$ to denote the class of bidder $i$, and $\theta_i = (v_i, \tau_i)$ as her \emph{type}\footnote{We distinguish the terms ``class" and ``type" in this work to facilitate exposition.}. We assume a bidder could misreport both her value and her class, \emph{i.e.}, a bidder may report her type as $\hat{\theta}_i =\{\hat{v}_i, \hat{\tau}_i\}$ with $\hat{v}_i\ne v_i$ and/or $\hat{\tau}_i\ne \tau_i$. 
Furthermore, we use $\theta$ to denote the type profile of all bidders, and $\theta_{-i}$ as that of all bidders except $i$. With these notations, we define $p_i(\hat{\theta}_i|\theta_i,\theta_{-i})$ as the payment of bidder $i$ by reporting her type as $\hat{\theta}_i$ under her true type $\theta_i$ and the type profile of others $\theta_{-i}$, and $u_i(\hat{\theta}_i|\theta_i,\theta_{-i})$ as the corresponding utility.

Two fundamental desiderata in mechanism design are \emph{incentive compatibility} (IC) and \emph{individual rationality} (IR).
\begin{defi}[Incentive Compatibility, IC]
A mechanism is incentive compatible if and only if 
$$u_i(\theta_i|\theta_i,\theta_{-i})\ge u_i(\hat{\theta}_i|\theta_i,\theta_{-i}), \quad \forall \hat{\theta}_i\ne \theta_i, \theta_{-i},  i\in \mathcal{N}.$$
\end{defi}
\begin{defi}[Individual Rationality, IR]
A mechanism is individually rational if and only if
$$p_i(\theta_i|\theta_i,\theta_{-i})\le v_i,\quad \forall \theta_{-i}, i\in \mathcal{N}.$$
\end{defi}

\noindent
In other words, IC guarantees that all bidders in the mechanism do not have incentives to misreport their types, IR guarantees that the bidders would never suffer a negative utility when truthfully bidding. We note that 
we will loosely use the term ``truthful" to describe a mechanism that is both IC and IR. It is well-known that VCG is truthful for UMs. In recent years, it is also proved that GSP is truthful for VMs \cite{wilkens2017gsp}. 
In this work, our goal is to design a truthful mechanism for mixed bidders while maximizing the overall welfare. Here, the mixed bidders could be either UMs or VMs. 

It is worth noting that the classical concept of \emph{social welfare}, \emph{i.e.}, the sum of the utilities of all bidders and the revenue of the seller, is not well-defined for VMs, because the transaction amount between bidders and the seller could not be canceled in the social welfare calculation, different from that of UMs.
Therefore, we borrow a concept of \emph{liquid social welfare} (LSW) from previous literature on mechanism design for budget-constrained bidders \cite{dobzinski2014efficiency,deng2021towards}.

\begin{defi}[Liquid Social Welfare, LSW]
The liquid social welfare of an allocation outcome $\Pi$ in a mechanism is the sum of the maximum willingness-to-pay of all bidders for the allocation, \emph{i.e.}, $$Wel(\Pi) = \sum_{k=1}^K v_{\pi_k} x_{k}.$$
\end{defi}
\noindent
With the definition of LSW, we can easily obtain that VCG is LSW-optimal for UMs and GSP is LSW-optimal for VMs. Therefore, when we strive to design an LSW-optimal mechanism for mixed bidders, a natural requirement is \emph{Robustness}:
\begin{defi}[Robustness]
A mechanism is robust if the outcome is the same as VCG when all bidders are UMs, and it is the same as GSP when all bidders are VMs.
\end{defi}
\noindent
Robustness guarantees that the mechanism is a natural generalization from existing mechanisms. 

In summary, a desired auction mechanism should be IC, IR, robust, and LSW-optimal.
Moreover, since we focus on truthful mechanisms, we do not distinguish $\theta_i$ and $\hat{\theta}_i$ when clear from the context.

\section{Warming Up: Public Classes}
To begin with, we study a basic setting where the classes of bidders are public. In such case, the problem falls within the field of single-parameter mechanism design \cite{noam2008algorithmic}, which would be simpler but useful for theoretical understanding. We propose the following Mechanism for bidders with PUblic classes (MPU):

\begin{mec} [MPU]
\quad
\label{mechanism1}
\begin{itemize}
    \item \textbf{Allocation}: Ranking the bidders by their values, and allocating slots accordingly from top to bottom.
    \item \textbf{Payment}: For each bidder $i$ with $1\le a_i\le K$, let $j$ be the bidder in the next lower slot, \emph{i.e.}, $a_j=a_i-1$.
    \begin{itemize}
        \item If $i$ is a UM, then $p_i = \frac{1}{x_{a_i}} \sum_{k=0}^{a_j} v_{\pi_k}(x_{k+1}-x_{k});$
        \item If $i$ is a VM, then $p_i = v_{j}.$
    \end{itemize}
\end{itemize}
\end{mec}

Intuitively, MPU directly allocates the slots to bidders by their values, regardless of their classes. Furthermore, the payment of UMs follows VCG, and that of VMs follows GSP (recall that we use bottom-up indexes for slots). We next show that this mechanism is IC, IR, and robust, which also guarantees the optimal LSW.

\begin{thm}
\label{thm1}
When the classes of bidders are public information, MPU is IC, IR, robust, and LSW-optimal.
\end{thm}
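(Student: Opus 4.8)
The plan is to establish the four properties in increasing order of difficulty, exploiting the fact that with public classes each bidder is a single-parameter agent who may only misreport her value $\hat v_i$ while her class $\tau_i$ is fixed. This lets me analyze a UM under the VCG-style per-click rule and a VM under the GSP-style rule $p_i=v_j$ essentially independently, since the allocation is determined by the reported values alone and is insensitive to classes; a deviation $\hat v_i$ merely relocates $i$ within the fixed order of the other bidders.

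I would first dispatch robustness and LSW-optimality, which hinge only on the allocation rule. The allocation ranks bidders by value and matches them top-to-bottom, i.e.\ it assigns the $k$-th largest value to the $k$-th largest CTR; by the rearrangement inequality this uniquely maximizes $\sum_{k=1}^K v_{\pi_k}x_k$, so since $Wel(\Pi)$ depends on the allocation only (not on payments), LSW-optimality follows once IC guarantees the truthful profile. For robustness I verify the two payment identities. When every bidder is a VM, $p_i=v_j$ is by definition the GSP price (the value of the bidder one slot below), so MPU reproduces GSP. When every bidder is a UM, I expand the telescoping sum $p_i x_{a_i}=\sum_{k=0}^{a_i-1} v_{\pi_k}(x_{k+1}-x_k)$ and match it term-by-term against the VCG externality of $i$: removing $i$ shifts each lower bidder up by one slot and pulls the highest non-winning bidder $\pi_0$ into slot $1$, and the $k=0$ summand (with $x_0=0$) accounts exactly for that entrant, so the expression equals $i$'s marginal harm to others and MPU reproduces VCG.

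Next I would check IR. For a VM, $p_i=v_j<v_i$ since $j$ sits below $i$ in the value ranking, so $p_i\le v_i$ holds. For a UM I bound the per-click price: every $v_{\pi_k}$ with $0\le k\le a_i-1$ is at most $v_j<v_i$, and $\sum_{k=0}^{a_i-1}(x_{k+1}-x_k)=x_{a_i}-x_0=x_{a_i}$, hence $p_i x_{a_i}\le v_j x_{a_i}<v_i x_{a_i}$, giving $p_i<v_i$ and a nonnegative utility.

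The crux is IC, which I would prove by an exchange argument separately for each class; since the outcome is a monotone step function of $\hat v_i$ it suffices to rule out one-slot deviations. For a UM currently in slot $s$, moving up to slot $s+1$ displaces the incumbent there, whose value $w$ exceeds $v_i$; the total payment rises by exactly $w(x_{s+1}-x_s)$ while the value rises by only $v_i(x_{s+1}-x_s)$, so the utility changes by $(v_i-w)(x_{s+1}-x_s)<0$, and symmetrically a downward move forfeits more value than it saves, so truthful reporting is optimal. For a VM the delicate point, and the main obstacle, is the upward deviation: to reach slot $s+1$ she must outbid the incumbent of value $w>v_i$, who then becomes the bidder directly below her, forcing $p_i=w>v_i$ and violating the admissibility constraint $p_i\le v_i$ that a VM never breaches; a downward deviation strictly lowers the objective $v_i x_{a_i}$; and since the truthful price $v_j$ is fixed by the bidder below and independent of $\hat v_i$ as long as $i$ keeps slot $s$, no report lowers it. Hence neither class can benefit from misreporting, establishing IC and, with it, LSW-optimality under truthful play.
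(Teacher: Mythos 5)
Your proposal is correct, and it is essentially the argument the paper has in mind: the paper omits the proof of Theorem~\ref{thm1} as ``straightforward,'' deferring to the known truthfulness of VCG for utility maximizers and of GSP for value maximizers in the single-parameter (public-class) setting, which is exactly what you verify --- allocation optimality via the rearrangement inequality, the telescoping identification of the UM payment with the VCG externality, and the swap/admissibility arguments for IC of each class. The only stylistic difference is that you prove the VCG and GSP sub-results from first principles (one-slot exchange arguments, justified by the telescoping payment structure) rather than citing \cite{noam2008algorithmic} and \cite{wilkens2017gsp}, which makes your write-up self-contained but not a different route.
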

Following the previous works on UM \cite{noam2008algorithmic} and VM \cite{wilkens2017gsp}, the proof of Theorem \ref{thm1} is straightforward, so we omit it here.

\section{Private Classes}

In the preceding section, we have developed the optimal mechanism for mixed bidders with public classes. However, when the class information is private, the problem turns out to be a multi-parameter mechanism design problem, which is hard to resolve in general. 
We can first examine whether MPU is IC for the setting of private classes. It could be easily observed that, if a VM misreports her class as UM while truthfully reporting her value, she may enjoy a lower payment without changing her allocation. In other words, MPU is ``unfair" to VMs in some sense, and this unfairness may bring the probability of strategic manipulation when the class information is private. 

\subsection{Mechanism for Mixed Bidders with Private Classes}

The above analysis implies that, in a truthful mechanism for the setting of private classes, the payment of a bidder should rely only on her allocated slot, rather than her class. In other words, suppose a bidder is allocated an identical slot in two cases while the allocation outcomes for others are also the same, then no matter she is a UM or VM, the payment should be the same. This requirement leads us to devise a payment rule for slots instead of for bidders, and this rule should combine VCG and GSP based on the types of bidders below the slot. Therefore, we propose the price of a slot as the maximum of the following two terms: 1) the VCG-style payment derived from the closest lower UM, and 2) the GSP-style payment derived from the closest lower VM. Specifically, for a slot $k\ge 1$, let the closest VM below $k$ be $i_{V}$, located at $k_{V}$, and the closest lower UM be $i_{U}$, located at $k_{U}$, then the price of  slot $k$ is given as
\begin{equation}
\label{eq15}
    p^{(k)} = \max\{\hat{p}^{(k)}_{U}, v_{i_{V}}\},
\end{equation}
where
\begin{equation}
\label{eq16}
    \hat{p}^{(k)}_{U} = \frac{1}{x_k}(p^{(k_{U})}x_{k_{U}} + v_{i_{U}}(x_{k}-x_{k_{U}})).
\end{equation}
When there is no VM or UM below slot $k$, we assign the corresponding payment term as 0.
Given this payment rule, we propose a Mechanism for mixed bidders with PRivate Classes (MPR) which is IC, IR, and robust, while achieving a desired approximation ratio in terms of LSW.

\begin{algorithm}[t]
\caption{MPR}
\label{alg1}
\KwIn{The type profile $\theta$ of all bidders, the CTR $x_k$ of all slots. }
\KwOut{The allocation and payment outcome for each bidder $i$.}
$p_{i} \leftarrow 0, \forall i\in \{1,...,n\}$;

Sort all bidders by their values;

Let $N$ be the set of top $K$ bidders by their values, and $i$ be the $(K+1)$th highest one;

$\pi_{0} \leftarrow i$;

$S\leftarrow$ the set of all UMs in $N$;

$T\leftarrow$ the set of all VMs in $N$;

\tcp{Allocate slots to all VMs in $T$.}

\If{$|T|> 0$ }{

\For{$k$ from $1$ to $|T|$}{
$\pi_k \leftarrow$ the VM with the $k$th lowest value in $T$;
}}

Update the payment $p^{(k)}$ for slots $1\le k \le |T|+1$ by equation (\ref{eq15});

\tcp{Allocate slots to UMs iteratively.}
\While{$|S|>0$}{

$i\leftarrow \arg\min_{j\in S} \{v_j\}$;

$\bar{k} \leftarrow K-|S|+1$; 

$k^{i}\leftarrow \arg\max_{1\le k\le \bar{k}} \{x_{k}(v_i- p^{(k)})\}$;

\If{$k^{i} \ne \bar{k}$}{
\resizebox{0.38\textwidth}{!}{\tcp{Move existing bidders up one slot.}}
\For{$k$ from $\bar{k}$ down to $k^{i}+1$}{
$\pi_{k} \leftarrow \pi_{k-1}$;
}
}
$\pi_{k^{i}} \leftarrow i$;

Update the payment $p^{(k)}$ for slots $k^{i}+1 \le k \le \bar{k}+1$ by equation (\ref{eq15});

$S\leftarrow S\backslash \{i\}$;
}

$p_{\pi_k} \leftarrow p^{(k)}, \forall k\in \{1,...,K\}$;

Return $\pi_k$ and $p_i$ for each slot $k$ and each bidder $i$.
\end{algorithm}

The complete pseudo-code of MPR is presented in Algorithm \ref{alg1}. The key idea behind MPR is to fill the slots with all VMs, and then iteratively assign a slot with the best utility for a UM. 
In Algorithm \ref{alg1}, MPR first sorts the bidders by their values and obtains the set of the top $K$ bidders as $N$ (Lines 2-3). The bidder with the $(K+1)$th highest value would be allocated the dummy slot indexed by 0 as a basis for pricing (Line 4). We denote $S$ as all the UMs in $N$, and $T$ as all the VMs correspondingly (Lines 5-6). Next, MPR fills the lowest slots with all the VMs in $T$ according to their values (Lines 7-9). Then we compute the prices for slots $1\le k\le |T|+1$ following (\ref{eq15}) based on the allocated VMs (Line 10). Indeed, since no UMs in $S$ are allocated slots at this step, the term of $\hat{p}_U^{(k)}$ is set to $0$, hence the price computation could be simplified as the GSP payment rule. After calculating the prices for each slot, we determine the allocation for UMs. The UM not yet assigned with the lowest value is picked as $i$, and we choose the optimal slot $k^i$ for her, \emph{i.e.}, the slot with the highest utility (if there is a tie, choose the lowest one) (Lines 12-14). It is noteworthy that there are two kinds of choices of slots: 1) $k^{i}=\bar{k}=K-|S|+1$, \emph{i.e.}, the slot above all assigned bidders (note that $K=|T|+|S|$ in the first round and $S$ will be updated as the set of all unassigned UMs during the process); and 2) $k^{i}<\bar{k}$, \emph{i.e.}, a slot which an existing bidder occupies. In the former choice, we only need to allocate the slot $k^i$ to $i$. In the latter choice, we first move all bidders at and above slot $k^i$ up one slot and then allocate the slot $k^i$ to $i$ (Lines 15-18). Next, the prices of slots above $k^i$ are updated based on the value of the newly inserted UM (Line 19). Then the UM $i$ is removed from $S$, and the process is repeated until all UMs in $S$ are assigned a slot (Line 20). Finally, if a bidder is assigned the slot $k$, her payment for a click would be the price of the slot; otherwise, her payment would be 0 (Line 21).

Next, we provide an example to illustrate the running process of MPR. One can observe from the example that, interestingly, VMs with lower values may obtain a higher slot than UMs with higher values.

\begin{exmp}
\label{exmp1}
Assume there are four slots and five bidders, and their CTRs or types are presented in Fig. \ref{fig:example}. In MPR, we first place the bidder with the fifth highest value, \emph{i.e.}, bidder A, at the dummy slot $0$. Then the remaining VMs B and C are placed at slots $1$ and $2$, respectively. We can hence calculate the prices for slots 1, $2$ and $3$, \emph{i.e.}, $p^{(1)}=6, p^{(2)}=7, p^{(3)}=8$. With these prices, we get the utilities of bidder D at each slot: $0.3$, $0.4$, and $0.3$ for slots $1$, $2$, and $3$, respectively. Then the optimal one, \emph{i.e.}, slot $2$ is allocated to her, and bidder C is moved to slot $3$. Next, the prices of slots $3$ and $4$ are updated as $p^{(3)}=\frac{23}{3}, p^{(4)}=8$ by (\ref{eq15}). Finally, we get the utilities of bidder E at each slot: $0.4$, $0.6$, $0.7$, and $0.8$ for slots $1$, $2$, $3$, and $4$, respectively, and thus, slot $4$ is allocated to bidder E. The payments of bidders for a click are given by the corresponding prices of their obtained slots. 
\end{exmp}

\begin{figure}
    \centering
    \includegraphics[width = .3\textwidth]{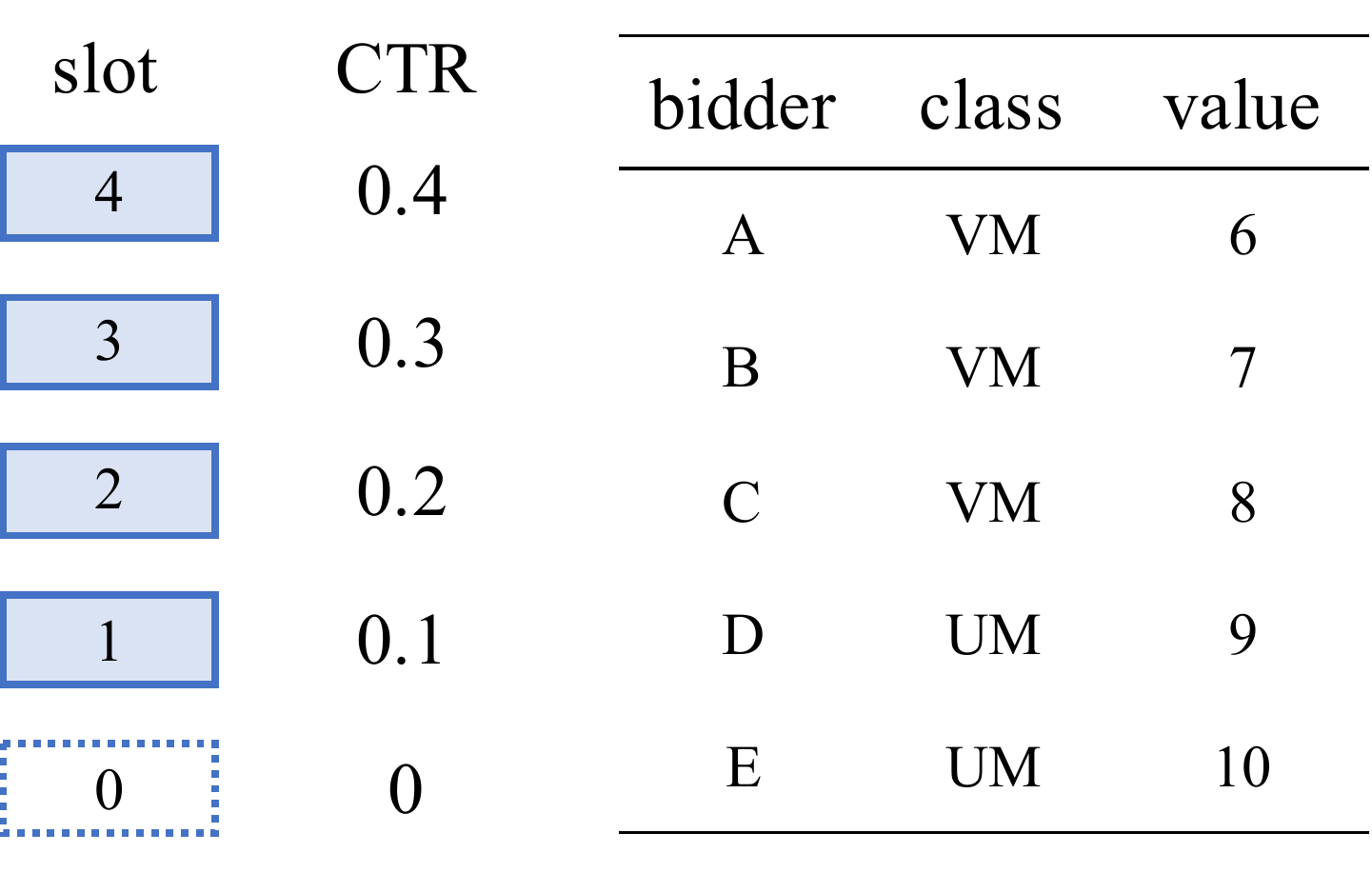}
    \caption{The illustration of four slots with their CTRs, and five bidders with their classes and values in Example \ref{exmp1}.}
    \label{fig:example}
\end{figure}

\subsection{Game Theoretical Properties}
Before proving game theoretical properties of MPR, we first define a concept of \emph{marginal payment increase}~\cite{bachrach2016mechanism}, 
to measure the cost performance for a bidder to obtain a higher slot. Based on this concept, we provide several lemmas to help understand the ideas behind MPR, which are helpful to the proof of IC and IR.

\begin{defi}
For two slots $k'>k$, the marginal increase of payment is defined as 
\begin{equation}
\Delta(k,k') \triangleq \frac{p_{\pi_{k'}}x_{k'} - p_{\pi_{k}}x_{k}}{x_{k'}-x_k}.
\end{equation}
\end{defi}

\begin{lem}
\label{marginal}
If a UM $i$ is allocated a slot $a_i = k$, we have $\Delta(k,k+1)=v_i$.
\end{lem}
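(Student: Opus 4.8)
The plan is to show that when the bidder at slot $k$ is a UM, the price of the slot immediately above it is governed by the VCG-style term in (\ref{eq15}), after which the claimed identity is a one-line computation from the recursion (\ref{eq16}). Since $\pi_k=i$ is a UM sitting directly below slot $k+1$, the closest lower UM for slot $k+1$ is $i$ itself, with $k_U=k$. Substituting into (\ref{eq16}) gives $\hat{p}_U^{(k+1)} x_{k+1} = p^{(k)} x_k + v_i(x_{k+1}-x_k)$, so that
\begin{equation}
\frac{\hat{p}_U^{(k+1)} x_{k+1} - p^{(k)} x_k}{x_{k+1}-x_k} = v_i .
\end{equation}
Hence, if I can argue that $p^{(k+1)} = \hat{p}_U^{(k+1)}$, i.e.\ the maximum in (\ref{eq15}) at slot $k+1$ is attained by the UM term, then $p_{\pi_{k+1}} = \hat{p}_U^{(k+1)}$ and $\Delta(k,k+1) = v_i$ follows immediately.

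Next I would reduce ``the UM term wins the max'' to a value comparison. Let $i_V$ be the closest VM below slot $k+1$; because slot $k$ is occupied by the UM $i$, $i_V$ is also the closest VM below slot $k$, and therefore $p^{(k)} \ge v_{i_V}$ by (\ref{eq15}). A short manipulation of the target inequality $\hat{p}_U^{(k+1)} \ge v_{i_V}$, after clearing $x_{k+1}$ and bounding $p^{(k)} x_k \ge v_{i_V} x_k$, collapses to $(x_{k+1}-x_k)(v_i - v_{i_V}) \ge 0$, which holds exactly when $v_i \ge v_{i_V}$. So the whole lemma rests on the structural fact that a UM always has strictly larger value than the closest VM beneath it.

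Finally I would establish that fact by examining the insertion loop (Lines 11--20). The key observation is that, once $i$ is placed, later insertions only shift contiguous blocks upward and only add UMs, so the set of VMs lying below $i$ never changes; it therefore suffices to rule out a high-value VM below $i$ at the moment $i$ is inserted. Suppose, for contradiction, that some VM $V$ with $v_V > v_i$ sits below $i$'s chosen slot $k^i$, and take $V$ to be the closest such VM. Then every slot strictly between $V$ and $k^i$, up to and including $k^i$, has $V$ as its closest lower VM, so by (\ref{eq15}) its price is at least $v_V > v_i$, making $i$'s utility negative there. But slot $1$ always has only the dummy $\pi_0$ below it, whose value is the $(K+1)$th largest and hence below $v_i$, so slot $1$ offers $i$ strictly positive utility; since the algorithm assigns $i$ a utility-maximizing slot, $i$ cannot land on a negative-utility slot, a contradiction. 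Because VMs retain their increasing order from bottom to top, the closest VM below carries the largest value among all VMs beneath $i$, whence $v_i > v_{i_V}$.

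The main obstacle is this last structural step: it requires carefully tracking how the slot shifts and price updates in the loop preserve the relative ordering of VMs and UMs, and it hinges on pinning down that $i$ always retains a strictly positive-utility fallback (slot $1$, priced by the dummy) so that being assigned a negative-utility slot is impossible. Once this invariant is in hand, the recursion computation and the max-reduction are routine.
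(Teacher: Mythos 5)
Your proof is correct, and its core reduction coincides with the paper's: both show that in the maximum of (\ref{eq15}) at slot $k+1$ the VCG-style term dominates the GSP-style term, which after the one-line computation from (\ref{eq16}) yields $\Delta(k,k+1)=v_i$, and both reduce this domination to the comparison $v_i \ge v_{i_V}$ between the UM's value and the value of the closest VM below her. Where you differ is in how that comparison is obtained. The paper sandwiches through the slot price: $v_{i_V}\le p^{(k)}$ by (\ref{eq15}) applied at slot $k$, and $p^{(k)}\le v_i$ because a UM's utility must be non-negative (slot $1$ is always available), giving $v_{i_V}\le v_i$ in two lines. You instead prove a structural invariant of the insertion loop --- no VM with value above $v_i$ ever lies below $i$ --- by a contradiction at the moment $i$ is inserted (every slot between such a VM and $k^i$ would be priced above $v_i$, while slot $1$, priced at the dummy's value, is a strictly positive-utility fallback), and transfer it to the final allocation by noting that later insertions only add UMs below $i$. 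Your route is longer but arguably more rigorous: the paper's claim $p^{(k)}\le v_i$ concerns the \emph{final} price of $i$'s slot, and justifying it from an insertion-time argument tacitly requires that later insertions never land at or below $i$ --- a fact established only in Lemma \ref{same}, which itself invokes Lemma \ref{marginal}. Your value-based invariant is insensitive to later shifts and price updates, so it sidesteps this circularity. The only omission is the degenerate case in which no VM exists below slot $k+1$; there the GSP term is $0$ and the claim is immediate, as the paper notes parenthetically.
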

\begin{proof}
For slot $k+1$, UM $i$ is the closest lower UM, and we denote $\hat{i}$ as the closest lower VM at slot $\hat{k}$ (if there does not exist such VM, we can directly observe that $\Delta(k,k+1) = v_i$). 
As $i$ is a UM, her utility would always be non-negative, \emph{i.e.}, $p^{(k)}\le v_i$, otherwise, she would be assigned at least the slot $k=1$, yielding a non-negative utility. Since $\hat{i}$ is also the closest lower VM for $i$, we can further derive that $v_{\hat{i}}\le p^{(k)}\le v_i$. Therefore, in the maximum function of computing $p^{(k+1)}$ by (\ref{eq15}), we have that the first term (the VCG-style payment from $i$) is always no less than the second term (the GSP-style payment from $\hat{i}$). This implies that  $\Delta(k,k+1) = v_i$.
\end{proof}

\begin{lem}
\label{same}
For two bidders $i$ and $j$, if $v_i>v_j$ and $\tau_i = \tau_j$, \emph{i.e.}, they are both UMs or both VMs, then we have $a_i>a_j$.
\end{lem}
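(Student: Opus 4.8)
The plan is to treat separately the two sub-cases permitted by $\tau_i=\tau_j$, relying throughout on one structural feature of Algorithm~\ref{alg1}: each time a UM is inserted, the inner loop that sets $\pi_k\leftarrow\pi_{k-1}$ shifts the whole occupied block above the insertion slot up by exactly one position, so every insertion preserves the \emph{relative} order of the bidders already placed. In the case $\tau_i=\tau_j=VM$ this alone finishes the argument: the VMs are initially placed with the $k$th lowest VM in slot $k$, so among VMs a lower slot always carries a strictly lower value; since later UM insertions only shift contiguous top blocks, this value ordering is an invariant of the entire run, and hence $v_i>v_j$ forces $a_i>a_j$.

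For $\tau_i=\tau_j=UM$, the loop processes UMs in increasing value, so $j$ is inserted before $i$. By shift-preservation it suffices to prove the per-insertion claim that a UM, when inserted, is placed strictly above every UM already present; then $i$ lands above $j$ and all later insertions move the two up together, so $a_i>a_j$ persists. Let $\ell$ be the already-placed UM in the highest slot $m$ among placed UMs; since UMs arrive in increasing value we have $v_\ell<v_i$, and it is enough to show the chosen slot satisfies $k^i>m$. Writing $f_b(k)=x_k\big(v_b-p^{(k)}\big)$ for the utility a UM of value $v_b$ would obtain at slot $k$ under the current prices, I would combine two facts: (i) the single-crossing identity $f_i(k)-f_\ell(k)=x_k(v_i-v_\ell)$, which is nondecreasing in $k$ because $x_k$ is nondecreasing in $k$ and $v_i>v_\ell$; and (ii) that $\ell$ selected $m$ as its lowest utility-maximizing slot, so $f_\ell(m)\ge f_\ell(k)$ for all $k\le m$, strictly for $k<m$. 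A bookkeeping point makes (ii) usable: the prices $p^{(1)},\dots,p^{(m)}$ have not changed since $\ell$ was inserted, because the price-update step recomputes only slots strictly above an insertion and no UM is inserted between $\ell$ and $i$; thus (ii) holds under exactly the prices that define $f_i$.

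Chaining these, for every $k\le m$ I would get $f_i(m)-f_i(k)=\big(f_\ell(m)-f_\ell(k)\big)+(x_m-x_k)(v_i-v_\ell)\ge 0$, strictly positive for $k<m$. For the final step up I would invoke Lemma~\ref{marginal}: as $\ell$ is a UM at slot $m$, $\Delta(m,m+1)=v_\ell$, so $f_i(m+1)-f_i(m)=\big(v_i-\Delta(m,m+1)\big)(x_{m+1}-x_m)=(v_i-v_\ell)(x_{m+1}-x_m)>0$. Hence $f_i(m+1)>f_i(k)$ for all $k\le m$, and the utility-maximizing $k^i$ (even with ties broken toward the lowest slot) must satisfy $k^i\ge m+1>m$, which proves the per-insertion claim.

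I expect the UM case to be the crux. The naive route---tracking the shape of $f_i$ slot by slot---fails, because the marginal increments $\Delta(k,k+1)$ at VM slots can exceed $v_i$ (a GSP price can jump sharply across a high-value VM), so $f_i$ need not be unimodal and slot-by-slot monotonicity is unavailable. The argument above sidesteps this by comparing $i$ against $\ell$ rather than against the raw price sequence, reducing everything to $\ell$'s own optimality together with single crossing; the one delicate point is precisely the justification that $\ell$'s recorded slot preference still refers to the current prices. (Throughout I assume the CTRs are strictly ordered so that $x_{m+1}>x_m$; if equal CTRs are permitted, equal-CTR slots are interchangeable and the claim is read modulo that reordering.)
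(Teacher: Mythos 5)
Your proof is correct and takes essentially the same route as the paper's: the paper also disposes of the VM case by order preservation, and for UMs it uses exactly your two ingredients --- revealed preference of the earlier-inserted, lower-valued UM at unchanged prices combined with single crossing for non-adjacent slots, and Lemma~\ref{marginal} for the step past that UM's own slot --- only packaged as a minimal-counterexample contradiction on the final allocation rather than as your per-insertion invariant. Your write-up additionally makes explicit the bookkeeping (prices of slots $1,\dots,m$ are frozen between the two insertions, and later insertions preserve relative order) that the paper compresses into the single assertion that the allocation below $a_i$ is the same in the two rounds.
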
 

\begin{proof}
For VMs, it is straightforward that bidder $i$ is always allocated a higher slot than $j$ during the algorithm process. For UMs, without loss of generality, we first assume that $i$ is the UM with the lowest slot such that $a_i<a_j$ for a UM $j$ with $v_i>v_j$. Then we have that the allocation below $a_i$ would be the same in the rounds of assigning $i$ and assigning $j$. Therefore, if $a_i < a_j -1$, we know that the bidder $j$ has faced the same choice of slot $a_i$ with the same price, but she chose $a_j$. As $v_i>v_j$ and $x_{a_j} > x_{a_i}$, by the utility function of UMs, we have bidder $i$ too prefers slot $a_j$ to $a_i$, leading to a contradiction. If $a_i = a_j -1$,  by lemma \ref{marginal}, we obtain that bidder $i$ prefers slot $a_j+1$ to $a_j$ as the marginal price $\Delta(a_j, a_j+1)=v_j<v_i$, which is a contradiction.
\end{proof}

\begin{lem}
\label{between}
For two bidders $i$ and $j$, if bidder $i$ is a VM and $j$ is a UM, and $a_i<a_j$, then we have $v_i<v_j$.
\end{lem}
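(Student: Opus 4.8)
The plan is to prove the statement by \emph{strong induction on the order in which the UMs are inserted} in the while-loop of Algorithm~\ref{alg1} (i.e.\ in increasing order of value), maintaining throughout the loop the joint invariant: $(\star)$ no placed UM occupies a slot above a VM of strictly larger value; and $(\dagger)$ every placed bidder pays at most her value at her slot, $p^{(k)}\le v_{\pi_k}$. Since the shift operation (Lines~16--18) preserves the relative order of all already-placed bidders and never alters their values, property $(\star)$ for the final allocation is exactly the claim of the lemma. Hence it suffices to show that when UM $j$ is inserted it is \emph{not} placed above any VM of value larger than $v_j$, and that $(\dagger)$ is preserved. First I would record the base case: before any UM is inserted only VMs are present, so $(\star)$ holds vacuously and $(\dagger)$ holds because each VM then pays the GSP price, namely the value of the next-lower VM, which is smaller by the VM-sorting established in Lemma~\ref{same}.

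The engine of the argument is a marginal-payment computation. Writing $U_j(k)=x_k(v_j-p^{(k)})$ for the utility UM $j$ would obtain at slot $k$, a telescoping over consecutive slots gives $U_j(k')-U_j(k)=\sum_{t=k}^{k'-1}(x_{t+1}-x_t)\,(v_j-\Delta(t,t+1))$. The key fact I would establish is that, under $(\dagger)$, every occupied slot $t$ satisfies $\Delta(t,t+1)\ge v_{\pi_t}$: for a UM occupant this is exactly Lemma~\ref{marginal} (with equality), and for a VM occupant it follows because slot $t{+}1$ has $\pi_t$ as its closest lower VM, so its GSP term forces $p^{(t+1)}\ge v_{\pi_t}$, while $(\dagger)$ gives $p^{(t)}\le v_{\pi_t}$, and the two bounds combine to $\Delta(t,t+1)\ge v_{\pi_t}$. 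Now suppose, for contradiction, that $j$ is placed at $k^{j}>m$, where a VM $i$ with $v_i>v_j$ currently sits at slot $m$. By $(\star)$ together with the VM-sorting of Lemma~\ref{same}, every occupant of slots $m,\dots,k^{j}-1$ has value at least $v_i$, so $\Delta(t,t+1)\ge v_i>v_j$ for all $t$ in this range; plugging into the telescoping makes every summand negative, whence $U_j(k^{j})<U_j(m)$. This contradicts $k^{j}$ being a utility-maximizing choice for $j$ (slot $m$ is feasible since $m\le\bar k-1$). Therefore $k^{j}\le m$, $j$ lands weakly below $i$, and $(\star)$ is preserved.

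It remains to preserve $(\dagger)$, and this is the step I expect to be the main obstacle. Among the slots whose prices are recomputed by (\ref{eq15}) when $j$ is inserted, the UM occupants and $j$ herself satisfy $(\dagger)$ immediately (their utility-maximizing choice is individually rational, as in Lemma~\ref{marginal}), so the only nontrivial case is a VM $\pi_\ell$ whose slot is recomputed. For such a $\pi_\ell$ the GSP term is the value of the next-lower VM, which is below $v_{\pi_\ell}$ by Lemma~\ref{same}, so the whole difficulty is to bound the VCG term $\hat p^{(\ell)}_{U}$ by $v_{\pi_\ell}$. Unwinding (\ref{eq16}) yields the VCG-indifference identity $x_\ell\,(v_{i_U}-\hat p^{(\ell)}_{U})=x_{k_U}\,(v_{i_U}-p^{(k_U)})$, where $i_U$ is the closest lower UM, at slot $k_U$. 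The intended resolution is that $i_U$, being a utility maximizer, optimally chose to remain at $k_U$ rather than move up to slot $\ell$, and at the moment of that choice slot $\ell$ carried only a GSP price induced by $\pi_\ell$ (or a lower VM), i.e.\ a price at most $v_{\pi_\ell}$; combining this optimal choice with the indifference identity forces $\hat p^{(\ell)}_{U}\le v_{\pi_\ell}$.

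The delicate point is that the identity involves \emph{current} prices and slots while $i_U$'s choice was made \emph{earlier}, before subsequent shifts relabeled the slots. The hard part will therefore be the bookkeeping that matches slot $\ell$ to the slot $i_U$ compared against at its insertion, together with the monotonicity observation that every later insertion (always of a higher-value UM, placed above $i_U$ by Lemma~\ref{same}) can only \emph{raise} the prices of the slots above $i_U$, so that $i_U$'s original preference for $k_U$ over the higher slots persists in the current configuration. Granting this monotone-price bookkeeping, $(\dagger)$ is preserved, the induction goes through, and the lemma follows.
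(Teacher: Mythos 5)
Your core step---showing that UM $j$ is never placed above a VM $i$ with $v_i>v_j$, via the telescoping of $\Delta(t,t+1)$---is sound \emph{conditional on} the invariant $(\dagger)$, but the preservation of $(\dagger)$ is exactly where your proof stops being a proof. What you defer as ``monotone-price bookkeeping'' is not bookkeeping: it is, in essence, the paper's separate individual-rationality theorem for VMs, which the paper proves \emph{after} this lemma and with a dedicated argument. In particular, your sketch's load-bearing claim that at the moment of $i_U$'s choice the higher slot ``carried only a GSP price induced by $\pi_\ell$ (or a lower VM)'' is not automatic: by (\ref{eq15}) the pre-insertion price of that slot is $\max\{\hat p^{(\ell)}_{U}, v_{\pi_\ell}\}$, where the VCG term comes from UMs placed earlier below, and ruling out that this term dominates needs its own argument (the paper's IR proof does it by showing that if the VCG term dominated, the UM would have strictly preferred the higher slot, a contradiction with its observed choice). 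Note also that what $i_U$'s optimality gives you directly is a bound of the \emph{new} price of slot $\ell$ by the \emph{old} price of slot $\ell$; but the old price is only bounded by the value of the old occupant of $\ell$, which is \emph{larger} than the value of the shifted-up VM who now sits there, so simple monotonicity cannot close the gap. As written, the induction therefore rests on an unproven---and hard---sub-statement.

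The gap is also avoidable, because the lemma does not need any IR invariant at all; this is what the paper's (much shorter) proof exploits. By Lemma~\ref{same} it suffices to rule out a violating \emph{neighboring} pair: VM $i$ at slot $k$, UM $j$ at slot $k+1$, with $v_i>v_j$. Then $i$ is the closest lower VM for slot $k+1$, so by (\ref{eq15}) the price of $j$'s slot satisfies $p^{(k+1)}\ge v_i>v_j$, i.e.\ $j$ ends with negative utility; but a UM's utility-maximizing choice always yields nonnegative utility, since slot $1$---priced at the value of the slot-$0$ bidder, which is below every top-$K$ value---is always among her options. The same observation collapses your own argument: for any candidate slot $k^{j}$ above VM $i$, the closest lower VM to $k^{j}$ has value at least $v_i$ by VM sorting, so the pre-insertion price already satisfies $p^{(k^{j})}\ge v_i>v_j$ and hence $U_j(k^{j})<0\le U_j(1)$. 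No telescoping, no bound $\Delta(t,t+1)\ge v_{\pi_t}$ for VM occupants, and therefore no invariant $(\dagger)$ is needed.
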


\begin{proof}
Armed with Lemma \ref{same}, it suffices to prove the statement for neighboring slots, \emph{i.e.}, let $a_i=k$, then $a_j=k+1$. Assume $v_i>v_j$ for contradiction, then the price for bidder $j$ is at least $v_i$, \emph{i.e.}, $j$ suffers a negative utility. However, she can at least choose the lowest slot with non-negative utility, as discussed earlier, which is a contradiction.
\end{proof}
\begin{lem}
\label{marginal2}
If a UM $i$ is allocated a slot $a_i = k$, then we have $\Delta(k,k')\ge v_i, \forall k'>k.$
\end{lem}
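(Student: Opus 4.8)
The plan is to recast the claim as a lower bound on the total payment collected at each slot and to prove it by strong induction on the slot index. Since $x_{k'}\ge x_k$ for $k'>k$, the inequality $\Delta(k,k')\ge v_i$ is equivalent to
\[
p^{(k')}x_{k'}\ge p^{(k)}x_k+v_i\,(x_{k'}-x_k),
\]
where $p^{(k)}$ denotes the final price of slot $k$ determined by (\ref{eq15})--(\ref{eq16}). I would prove this inequality for every $k'\ge k$; the base case $k'=k$ holds with equality.

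For the inductive step ($k'>k$), I would first discard the GSP term: by the maximum in (\ref{eq15}) we have $p^{(k')}\ge \hat{p}^{(k')}_{U}$, so it suffices to lower bound the VCG term. Because $i$ is a UM occupying slot $k<k'$, there is always a UM strictly below slot $k'$; hence the closest lower UM $i_{U}$, at slot $k_{U}$, is a genuine bidder (the degenerate ``$0$'' fallback never triggers) with $k\le k_{U}<k'$. Lemma \ref{same} then gives $v_{i_{U}}\ge v_i$, since UMs are ranked by value. Writing (\ref{eq16}) in total-payment form yields $\hat{p}^{(k')}_{U}x_{k'}=p^{(k_{U})}x_{k_{U}}+v_{i_{U}}(x_{k'}-x_{k_{U}})$.

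Next I would apply the induction hypothesis at slot $k_{U}$ (legitimate since $k\le k_{U}<k'$), namely $p^{(k_{U})}x_{k_{U}}\ge p^{(k)}x_k+v_i(x_{k_{U}}-x_k)$, and substitute it, using $v_{i_{U}}\ge v_i$ on the incremental CTR term:
\[
p^{(k')}x_{k'}\ge p^{(k)}x_k+v_i(x_{k_{U}}-x_k)+v_i(x_{k'}-x_{k_{U}})=p^{(k)}x_k+v_i(x_{k'}-x_k),
\]
which closes the induction and gives $\Delta(k,k')\ge v_i$ after dividing by $x_{k'}-x_k>0$.

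The point I expect to require the most care is recognizing that there is in fact no adversarial case to fight: the VM/GSP term in (\ref{eq15}) can only raise a slot's price, so discarding it preserves the lower bound, and the whole argument rides on the VCG term propagating the nearest UM's value upward. The two things to verify carefully are therefore (i) that a UM (namely $i$ itself) always lies below $k'$, so $\hat{p}^{(k')}_{U}$ is induced by a real UM of value at least $v_i$ via Lemma \ref{same}, and (ii) that the recursion is well-founded because $\hat{p}^{(k')}_{U}$ refers only to the strictly lower slot $k_{U}$, which lets the induction hypothesis apply. A minor remaining point is the degenerate case $x_{k'}=x_k$, where the displayed bound still forces a nonnegative numerator and the statement holds under the convention for $\Delta$.
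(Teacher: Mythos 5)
Your proof is correct and rests on the same two ingredients as the paper's: the max in (\ref{eq15}) lets you discard the GSP term, and Lemma \ref{same} guarantees that every UM at or above slot $k$ has value at least $v_i$. The difference is in how the interval from $k$ to $k'$ is traversed. The paper handles $k'=k+1$ via Lemma \ref{marginal}, then splits $k'>k+1$ into two cases --- no UM strictly between $k$ and $k'$ (one application of (\ref{eq15})--(\ref{eq16}), since $i$ is then the closest lower UM for slot $k'$) and at least one UM in between (cut the interval at consecutive UMs and write $\Delta(k,k')$ as a convex combination of per-segment marginal increases, each at least $v_i$). Your strong induction folds both cases into a single step: the recursion (\ref{eq16}) points to the closest lower UM at $k_U$, the induction hypothesis covers $k_U$, and telescoping does exactly what the paper's convex-combination step does; unrolled, your induction visits the same chain of consecutive UMs as the paper's decomposition. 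Your packaging is more self-contained and mechanical --- it needs only one-sided bounds, never the exact equality of Lemma \ref{marginal}, and it replaces the paper's somewhat informal ``linear combination'' sentence with a precise inductive substitution --- while the paper's route reuses Lemma \ref{marginal}, which it must prove anyway for the IC argument, so that redundancy costs it nothing. One point both arguments share, and which you come closer to flagging than the paper does: the final slot prices must satisfy (\ref{eq15})--(\ref{eq16}) with respect to the final allocation, which holds because MPR recomputes prices (Lines 10 and 19) after every insertion and an insertion never changes the allocation, hence never invalidates the prices, below the inserted slot.
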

\begin{proof}
For $k'=k+1$, we have proved the statement in Lemma \ref{marginal}. 
Next, for $k'>k+1$, we consider two cases: 1) there does not exist a UM between slot $k$ and $k'$; 2) there exists at least one UM between slot $k$ and $k'$. In the former case, let $j = \pi_{k'-1}$, then we know that $j$ is a VM, and the closest lower UM is $i$, so we get that 
\begin{equation}
\label{payment}
    p_{\pi_{k'}} = \max \{\hat{p}_{U}^{(k')}, v_{j}\} \ge \hat{p}_{U}^{(k')},
\end{equation}
where
\begin{equation}
\label{payment_UM}
    \hat{p}_{U}^{(k')} = \frac{1}{x_{k'}}(p_{\pi_k}x_{k} + v_i(x_{k'}-x_{k})).
\end{equation}
Combining (\ref{payment}) and (\ref{payment_UM}), we can derive that $\Delta(k,k')\ge v_i$.
In the latter case, \emph{i.e.}, there exists a set of UMs $U$ between $k$ and $k'$. By Lemma \ref{same}, we have the value of any UM in the set is higher than $v_i$. Then, for any pair of slots of neighboring UMs $k^+, k^-$, such that $\pi_{k^+}, \pi_{k^-} \in U\cup \{i\}$ with $k^+> k^-$, we can obtain $\Delta(k^-,k^+)\ge v_{k^-}\ge v_i$, given the above analysis of the the former case. Also, let $k''$ be the highest slot of UMs in $U$, we have $\Delta(k'', k')\ge v_{k''}> v_i$. Finally, the marginal increase of payment between $k$ and $k'$ would be a linear combination of the ones for each neighboring pair of bidders, that is, $\Delta(k, k')\ge v_i$.
\end{proof}

\begin{coro}
MPR is a robust mechanism.
\end{coro}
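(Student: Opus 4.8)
The plan is to verify the two conditions in the definition of Robustness directly from the description of MPR in Algorithm \ref{alg1}, handling the all-VM and all-UM cases separately. In each case I would trace the execution of the algorithm and show that both the allocation and the resulting prices collapse to those of GSP and VCG respectively. The earlier structural lemmas, especially Lemma \ref{same}, do most of the work on the allocation side, while the payment side reduces to unrolling the recursive definition in (\ref{eq15})--(\ref{eq16}).

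First I would treat the case where every bidder is a VM. Then $S=\emptyset$ and $T=N$, so the \textbf{while} loop never executes and the only allocation step is Lines 7--9, which place the VM with the $k$th lowest value at slot $k$; since $x_k$ is increasing, this assigns the highest value to the highest-CTR slot, exactly the GSP ranking. For the prices, the absence of any UM below a slot forces the VCG-style term $\hat{p}^{(k)}_U$ in (\ref{eq16}) to be $0$, so (\ref{eq15}) reduces to $p^{(k)}=v_{i_V}=v_{\pi_{k-1}}$, the value of the occupant of the next lower slot. This is precisely the GSP payment, so MPR and GSP coincide.

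Next I would treat the case where every bidder is a UM, so $T=\emptyset$ and $S=N$. On the allocation side, Lemma \ref{same} already guarantees that any two UMs are ordered in slots by their values; since all bidders share the class UM and exactly the top $K$ bidders in $N$ are assigned, the final allocation ranks bidders by value and therefore agrees with the efficient VCG allocation. On the payment side, with no VM below any slot the GSP term in (\ref{eq15}) vanishes, the closest lower UM to slot $k$ is simply the occupant $\pi_{k-1}$ of slot $k-1$ (with $k_U=k-1$), and (\ref{eq16}) becomes the recursion $p^{(k)}x_k = p^{(k-1)}x_{k-1} + v_{\pi_{k-1}}(x_k-x_{k-1})$. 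Using the base case $p^{(0)}x_0=0$ from the dummy slot, I would unroll this telescoping recursion to obtain $p^{(k)} = \frac{1}{x_k}\sum_{l=0}^{k-1} v_{\pi_l}(x_{l+1}-x_l)$, which is exactly the VCG (MPU) payment for the UM at slot $k$. Hence MPR and VCG coincide.

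The main obstacle is not any single inequality but rather justifying that, after the iterative insertions and price updates in the \textbf{while} loop (Lines 11--20), the final prices are genuinely the ones prescribed by (\ref{eq15}) evaluated on the \emph{final} allocation, so that the clean recursion above is applicable. I would argue that each update in Line 19 restores (\ref{eq15}) for exactly the slots whose closest lower UM changed, and that no later insertion disturbs the price of a slot lying below the insertion point; consequently at termination every slot's price satisfies (\ref{eq15}) with respect to the final configuration, and the reduction to the VCG recursion goes through.
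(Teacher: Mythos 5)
Your proposal is correct and matches the paper's intent: the paper gives no detailed proof, stating only that robustness ``could be derived directly from the algorithm process of MPR,'' and your case analysis (all-VM reducing to GSP via the vanishing $\hat{p}^{(k)}_U$ term, all-UM reducing to VCG via Lemma \ref{same} and the telescoping recursion of (\ref{eq15})--(\ref{eq16})) is exactly that direct derivation, carried out in full. Your closing observation---that the incremental price updates in Line 19 leave every slot's price consistent with (\ref{eq15}) on the final allocation---fills in the one bookkeeping point the paper glosses over, and it is argued correctly.
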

This corollary could be derived directly from the algorithm process of MPR. It implies that MPR is a good generalization and combination of VCG and GSP, which matches our intuition well.
Building upon the above results, we next prove that MPR is IR and IC.

\begin{thm}
MPR is individually rational.
\end{thm}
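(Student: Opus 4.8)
The plan is to verify $p_i \le v_i$ separately for the two classes, since the per‑slot rule \eqref{eq15} treats the VCG‑style and GSP‑style terms asymmetrically.

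For a UM $i$ placed at slot $a_i$, I would show that its utility is nonnegative, which immediately gives $p^{(a_i)} \le v_i$ because $x_{a_i}>0$. The key observation is that when $i$ is processed in Algorithm~\ref{alg1}, the bottom slot $k=1$ is always a candidate, and the price $i$ would pay there is exactly $v_{\pi_0}$, the value of the $(K{+}1)$-th highest bidder occupying the dummy slot (one checks this from \eqref{eq15}–\eqref{eq16} using $x_0=0$, whether $\pi_0$ is a UM or a VM). Since $i$ lies in the top $K$, we have $v_{\pi_0}<v_i$, so slot $1$ already yields the strictly positive utility $x_1(v_i-v_{\pi_0})>0$; as $i$ selects the utility‑maximizing slot in Line~14, the chosen slot is also strictly positive, hence $p^{(a_i)}<v_i$. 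This is the routine part and mirrors VCG individual rationality.

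For a VM $i$ at slot $k$ the price is $p^{(k)}=\max\{\hat p^{(k)}_U,\,v_{i_V}\}$, so it suffices to bound each term by $v_i$. The GSP term is immediate: by Lemma~\ref{same} the closest lower VM satisfies $v_{i_V}<v_i$ (or the term is $0$). The crux is the VCG term $\hat p^{(k)}_U$ of \eqref{eq16}, and here the naive termwise bound fails: UM individual rationality only gives $p^{(k_U)}\le v_{i_U}$, whence $\hat p^{(k)}_U\le v_{i_U}$, but the closest lower UM $i_U$ may have $v_{i_U}>v_i$. The idea I would use instead is revealed preference. Since no UM is inserted between $i_U$ and $i$ after $i_U$ enters and all later UMs land above $i$, at the moment $i_U$ is inserted $i$ sits exactly one slot below its final position, and the slot $k$ directly above it has GSP component precisely $v_i$ (its closest lower VM is $i$). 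Because $i_U$ chose $k_U$ over slot $k$, the comparison $x_{k_U}(v_{i_U}-p^{(k_U)})\ge x_k(v_{i_U}-p^{(k)}_{\mathrm{old}})$, rearranged with the identity that the VCG term makes the marginal payment from $k_U$ to $k$ equal to $v_{i_U}$ (Lemma~\ref{marginal}), yields $\hat p^{(k)}_U\le p^{(k)}_{\mathrm{old}}$, the price of slot $k$ just before $i_U$ entered.

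It then remains to prove $p^{(k)}_{\mathrm{old}}\le v_i$, i.e.\ that the slot directly above $i$ was not overpriced before $i_U$'s insertion; as its GSP component is $v_i$, this reduces to bounding its VCG component, which is controlled by the previous closest lower UM of $i$. I would therefore run an induction along the sequence of UM insertions (equivalently, on the number of UMs below $i$), with base case the VM‑only placement where every VCG term vanishes. I expect the main obstacle to be exactly this inductive step: because a higher‑value UM can legitimately end up below a lower‑value VM, the revealed‑preference inequality above is the only thing preventing the VCG term from blowing up, so no termwise argument is available, and one must carefully track how the one‑slot upward shift triggered by each insertion interacts with the prices in \eqref{eq15}–\eqref{eq16}. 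Confirming that the GSP component genuinely ``resets'' the price of the slot above $i$ to $v_i$ at each stage, rather than inheriting an inflated VCG component, is where the real work lies.
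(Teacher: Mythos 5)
Your proposal has the right skeleton, and in fact it is the same skeleton as the paper's proof: the UM half (slot $1$ is always available at price $v_{\pi_0}<v_i$) matches the paper's one-line argument, and for the VM half you correctly reduce, via Lemma \ref{same}, to bounding the VCG-style term of (\ref{eq15}), and then invoke revealed preference of the closest lower UM $i_U$ (the paper's $j$) at the round of its insertion, when $i$ sat one slot below its final position and the slot $k$ above it had GSP component $v_i$. Your inequality $\hat p^{(k)}_U \le p^{(k)}_{\mathrm{old}}$ is exactly the paper's (\ref{eq3}). However, you then leave the decisive step --- showing $p^{(k)}_{\mathrm{old}}\le v_i$ --- as an unproven ``obstacle,'' to be handled by an induction over UM insertions that you do not carry out. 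That is a genuine gap: without it, the revealed-preference bound is vacuous, since $p^{(k)}_{\mathrm{old}}$ could a priori be inflated by a VCG component inherited from earlier UMs, which is precisely the scenario you flag as ``where the real work lies.''

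The paper closes this gap with a short contradiction argument rather than an induction. By (\ref{eq15}), $p^{(k)}_{\mathrm{old}}$ is the maximum of the GSP term $v_i$ and the VCG term extended from $\hat j$, the UM immediately below $i_U$ at that round. If that VCG term dominated, then the total payment of every candidate slot above $\hat j$ would exceed the VCG extension from $\hat j$ only at slot $k$ it is met with equality, so the marginal payment increase from $k_U$ up to $k$ would be at most $v_{\hat j}$ per unit of CTR; since $v_{\hat j}<v_{i_U}$ by Lemma \ref{same}, bidder $i_U$ would then strictly prefer slot $k$ to $k_U$ --- contradicting the very revealed-preference choice you already invoked. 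Hence $p^{(k)}_{\mathrm{old}}=v_i$ exactly (the paper's (\ref{eq4})), and combined with your inequality this yields (\ref{eq5}), i.e., the final VCG term at $i$'s slot is at most $v_i$. In other words, the missing inductive step you anticipate is not needed as an induction at all: the same comparison that gives you $\hat p^{(k)}_U \le p^{(k)}_{\mathrm{old}}$ simultaneously rules out an inflated old price, and any induction along insertion rounds would simply repeat this identical argument at each step.
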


\begin{proof}
For UMs, the proof is straightforward since a UM could at least choose the lowest slot, achieving a non-negative utility. For VMs, by Lemma \ref{same}, we first have that the value of all other VMs below a VM $i$ should always be lower than her, hence the price induced from the closest lower VM is lower than $v_i$, and it suffices to analyze the price induced from the closest lower UM. We denote this UM as $j$ (if such UM does not exist, then the theorem trivially holds). Since $j$ chooses the slot $a_j$ instead of the slot $a_i$ when allocating the slot to her (note that $i$ is located at slot $a_i-1$ in this round), we can derive that 
\begin{equation}
\label{eq3}
    x_{a_j}(v_j-p^{(a_j)})\ge x_{a_i}(v_j-p^{(a_i)}),
\end{equation}
where $p^{(a_j)}$ and $p^{(a_i)}$ are the prices of slot $a_j$ and $a_i$ at the round of allocating bidder $j$, respectively. Furthermore, by the payment rule (\ref{eq15}), we know that there are two possibilities: 1) $p^{(a_i)}= v_i$; 2) $p^{(a_i)}= \frac{1}{x_{a_i}}(p^{(a_{\hat{j}})}x_{a_{\hat{j}}} + v_{\hat{j}}(x_{a_i}-x_{a_{\hat{j}}}))$ where $\hat{j}$ is the UM immediately below $j$. Since $v_{\hat{j}}< v_j$, we can derive that bidder $j$ would prefer slot $a_i$ than $a_j$ if the latter probability occurs. Therefore, we obtain
\begin{equation}
\label{eq4}
    p^{(a_i)} = v_i.
\end{equation}
Combining (\ref{eq3}) and (\ref{eq4}), we have that
\begin{equation}
\label{eq5}
    v_i\ge \frac{1}{x_{a_i}}(p^{(a_j)}x_{a_j} + v_j(x_{a_i}-x_{a_j})).
\end{equation}
Since $p^{(a_j)}$ remains the same in the final outcome as in the round of allocating $j$, (\ref{eq5}) indicates that $i$ would enjoy a price no more than her value, which concludes the proof.
\end{proof}

\begin{thm}
MPR is incentive compatible.
\end{thm}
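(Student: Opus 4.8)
The plan is to prove incentive compatibility through the taxation (menu) principle: fixing the reported types $\theta_{-i}$ of all other bidders, I would show that bidder $i$, whatever type $\hat{\theta}_i$ she reports, effectively selects a single slot $k$ from a menu whose price is governed by the configuration of bidders \emph{below} her rather than by her own class, and that truthful reporting selects the menu item that is optimal for her true type. The payment rule (\ref{eq15})--(\ref{eq16}) was engineered precisely for this: for a fixed arrangement of the occupants of slots $1,\dots,k-1$, the price $p^{(k)}$ is determined and does not reference the occupant of slot $k$. I would organize the argument by first handling within-class value deviations ($\hat{\tau}_i=\tau_i$, $\hat{v}_i\ne v_i$) and then cross-class deviations, assembling a reference price curve $P(1)\le\cdots\le P(K)$ that $i$ faces and showing her truthful outcome maximizes $x_k(v_i-P(k))$ (for a UM) or lands at the top affordable slot (for a VM).

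The clean part is upward deviations of a true UM. If her truthful slot is $a_i$, then by Lemma \ref{marginal} and Lemma \ref{marginal2} the marginal price $\Delta(a_i,k')$ to climb to any higher slot $k'>a_i$ is at least $v_i$, so the utility change $(x_{k'}-x_{a_i})(v_i-\Delta(a_i,k'))$ is nonpositive; hence no over-report that pushes $i$ upward can strictly help. For downward deviations the CTR falls, and I would use the within-class monotonicity of Lemma \ref{same} to argue that under-reporting yields a weakly lower slot, then show that the final price charged at that lower slot is no smaller than the corresponding value of the reference curve, so that the same marginal comparison forces the utility to drop. Pinning down this last price comparison under an actual deviation (as opposed to along the fixed final-allocation curve) is where the insertion and push-up dynamics of Algorithm \ref{alg1} must be tracked carefully.

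For a true VM the objective is lexicographic, so I would show her truthful allocation is already the highest slot she can afford. By Lemma \ref{same} every VM below her has smaller value, so the GSP-style term never breaks affordability, and the individual rationality established above gives $p^{(a_i)}\le v_i$; conversely, any strictly higher slot would, by the marginal-payment lemmas applied to the occupant immediately above her, carry a price exceeding $v_i$ and thus be infeasible for her constraint $p_i\le v_i$. Consequently raising her reported value can only push her toward unaffordable slots and lowering it only forfeits value, while among equal-value outcomes the per-slot price is fixed, so she cannot undercut her payment either.

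The remaining and hardest case is cross-class deviation, and I expect it to be the main obstacle. Because the price (\ref{eq15}) is class-independent for a fixed set of bidders below a slot, a class switch can benefit $i$ only by \emph{rearranging} who ends up below her: reporting VM forces her to be placed bottom-up before any UM is inserted, which can change both the reachable slots and the occupants beneath them. I would close this gap by comparing the truthful run with the deviating run using Lemmas \ref{same}, \ref{between}, and \ref{marginal2}, showing that any slot made newly reachable by the reshuffle is offered at a price that renders it no more attractive than $i$'s truthful outcome --- for a true UM because the reshuffled marginals above her optimal slot still dominate $v_i$, and for a true VM because any newly reachable higher slot remains unaffordable. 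Carrying out this configuration-level comparison cleanly, rather than verifying the marginal inequalities themselves, is the step I anticipate to be the crux of the proof.
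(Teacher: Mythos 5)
Your proposal is a plan rather than a proof, and the two places where you explicitly defer the work are exactly where the paper's proof does its real work, so as it stands there is a genuine gap. The menu-principle framing does not hold up as stated: the price $p^{(k)}$ of a slot depends on which bidders end up \emph{below} it, and that set is not invariant under bidder $i$'s report --- her reported value and class change her insertion point in Algorithm \ref{alg1}, which pushes other bidders up and alters every subsequent price update. So there is no report-independent price curve $P(1)\le\cdots\le P(K)$ to appeal to; identifying which parts of the configuration are invariant \emph{is} the proof, not a starting point. The paper does this for a UM deviator by partitioning deviations relative to the nearest UMs $j$ (below) and $\hat{j}$ (above) in the truthful run: when the deviant slot stays between them, the allocation up to the round that allocates $j$ is literally unchanged, so revealed preference in the truthful run applies directly; when she lands above $\hat{j}$ or below $j$, the paper reduces back to this middle case via Lemma \ref{marginal} and Lemma \ref{marginal2} together with an iterative descent/ascent argument --- precisely the ``insertion and push-up dynamics'' tracking that you acknowledge you have not carried out for downward and cross-class deviations.

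The VM case has a concrete error in addition to the omission. Your claim that any strictly higher slot carries a price exceeding $v_i$ ``by the marginal-payment lemmas applied to the occupant immediately above her'' is not something Lemmas \ref{marginal} and \ref{marginal2} can deliver: those lemmas price slots above a \emph{UM} in the \emph{final truthful} allocation, whereas after the VM deviates the allocation reshuffles and those marginals no longer describe what she would pay. The paper instead uses revealed preference of the lowest UM $\hat{j}$ above her, recorded at the round $\hat{j}$ was allocated: $x_k(v_{\hat{j}}-p^{(k)}) < x_{a_{\hat{j}}}(v_{\hat{j}}-p^{(a_{\hat{j}})}) \le x_{a_{\hat{j}}}(v_{\hat{j}}-v_i)$ for every lower slot $k$, which combined with the price-composition rule (\ref{eq15}) yields $p^{(a_{\hat{j}})}x_{a_{\hat{j}}} \ge x_k p^{(k)} + v_{\hat{j}}(x_{a_{\hat{j}}}-x_k) > v_i x_{a_{\hat{j}}}$, i.e., negative utility above $\hat{j}$; monotonicity of slot prices over the algorithm's rounds then kills all higher slots. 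Moreover, when other VMs sit between $i$ and $\hat{j}$ (so the jump is not to an adjacent slot), the paper needs a separate sub-case analysis on whether $i$ reports UM or VM, which your sketch does not address at all. In short: the one direction you dispatch correctly (a true UM pushed upward, via Lemma \ref{marginal2}) matches the paper, but the cross-class and reshuffling cases you flag as ``the crux'' are indeed the crux, and they are missing.
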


\begin{proof}
We discuss the proof for UMs and VMs separately.

First, let bidder $i$ be a UM. We denote $j$ and $\hat{j}$ as the closest UMs below and above bidder $i$ in the truthful case, respectively (if such UMs do not exist, we can assume virtual ones at slot $0$ and $K+1$). Accordingly, we use $a_j$ and $a_{\hat{j}}$ to denote their slots in the truthful setting, and $a'_j$ and $a'_{\hat{j}}$ the slots in the setting where $i$ misreports her type. Next, no matter what class the bidder $i$ reports, we consider the following three cases: 1) $a'_i$, \emph{i.e.}, the slot of bidder $i$ when misreporting, is between $a'_j$ and $a'_{\hat{j}}$; 2) $a'_i$ is higher than $a'_{\hat{j}}$; 3) $a'_i$ is lower than $a'_{j}$. In the first case, we have that the outcome until the round of allocating bidder $j$ remains the same as the truthful setting. So, bidder $i$ faces both the choices of $a_i$ and $a'_i$ with the same price in the truthful setting, and she prefers $a_i$, implying that she would not get better off when misreporting. In the second case, by Lemma \ref{marginal2}, we have that the utility of bidder $i$ at $a'_i$ is always no better than that at slot $a'_{\hat{j}}$ by reporting her type as $(v_{\hat{j}}-\epsilon, UM)$, where $\epsilon$ is a sufficiently small positive number. Furthermore, by the analysis of the above case, we have that bidder $i$ prefers $a_i$ to the outcome of misreporting $(v_{\hat{j}}-\epsilon, UM)$. In the third case, let $\tilde{j}$ be the closest UM above $a'_i$. We can observe that the allocation of all the slots below $a'_i$ should be the same as the truthful setting. Therefore, if $a_{\tilde{j}}\ne a'_{i}$, $\tilde{j}$ have faced the same choices of $a'_i$ and $a_{\tilde{j}}$ and she prefers $a_{\tilde{j}}$ in the round of allocating $\tilde{j}$ in the truthful setting. Since $v_{i}>v_{\tilde{j}}$, we have bidder $i$ too prefers $a_{\tilde{j}}$ and could obtain it by misreporting $(v_{\tilde{j}}-\epsilon,UM)$ with sufficiently small $\epsilon$. Then by Lemma \ref{marginal}, we have bidder $i$ prefers the slot immediately above $\tilde{j}$, \emph{i.e.}, $a_{\tilde{j}}+1$, to $a_{\tilde{j}}$. One can run this process iteratively if $j>\tilde{j}$, until $a_{\tilde{j}}+1>a'_{j}$. Finally, we get that bidder $i$ prefers $a_i$ again by the analysis of the first case.

Second, let bidder $i$ be a VM. If she misreports her type and obtains a lower slot, obviously, her utility would decrease by the definition of VMs.  If she misreports her type and obtains a slot higher than a VM with a higher value, we can easily observe that the payment would be higher than $i$'s value, \emph{i.e.}, the utility of bidder $i$ would also decrease. Therefore, we only need to consider the case where bidder $i$ is allocated the same or a higher slot by misreporting her type, but not exceeding any VMs with higher values. By Lemma \ref{same}, the order of UMs, as well as the order of other VMs, will not change, so we have that all the UMs below bidder $i$ should not exceed $i$ in such case; otherwise, she could not obtain a higher or the same slot. Let $j$ be the highest UM below bidder $i$ in the truthful case, and accordingly, let $\hat{j}$ be the lowest UM above bidder $i$. Then we can derive that the allocation until the round of allocating bidder $j$ would be identical to the truthful case when $i$ misreports her type. 
Under these restrictions, we obtain that bidder $i$ must be above bidder $\hat{j}$ when she is untruthful; otherwise, she would be allocated at most the same slot $a_i$ and pay the same price. Therefore, we discuss the following two cases in the truthful setting: 1) $a_{\hat{j}} = a_i+1$, and 2) $a_{\hat{j}} > a_i+1$.

If $a_{\hat{j}} = a_i+1$, since UM $\hat{j}$ prefers the slot $a_{\hat{j}}$ in the truthful setting, we have that
\begin{equation}
\label{eq9}
    x_k (v_{\hat{j}}-p^{(k)}) < x_{a_{\hat{j}}} (v_{\hat{j}}-p^{(a_{\hat{j}})}) \le x_{a_{\hat{j}}} (v_{\hat{j}}-v_i),
\end{equation}
where $k$ is any slot below $a_{\hat{j}}$. Note that we do not need to consider the equality in the first inequality, as UMs would break ties by choosing the lower slots.
In the untruthful case, we get that bidder $i$ is allocated at least slot $a_{\hat{j}}$ and bidder $\hat{j}$ is allocated a lower slot $k$. Then by (\ref{eq15}) and (\ref{eq9}) we get
\begin{equation}
    p^{(a_{\hat{j}})}x_{a_{\hat{j}}} \ge x_k p^{(k)} + v_{\hat{j}} (x_{a_{\hat{j}}} - x_k) > v_i x_{a_{\hat{j}}},
\end{equation}
that is, bidder $i$ would suffer a negative utility at slot $a_{\hat{j}}$. Furthermore, one can easily observe that the prices of slots are non-decreasing during the algorithm process, \emph{i.e.}, the utility of bidder $i$ is still negative at higher slots. 

If $a_{\hat{j}} > a_i+1$, we have that there exist other VMs between $i$ and $\hat{j}$.
We consider the following two cases: 1) $i$ reports her class as VM, and 2) $i$ reports her class as UM. In the former case, as $i$ could only increase her value to obtain a higher slot (without exceeding higher VMs), the price of slot $a_{\hat{j}}$ is not determined by the price of VM $i$, nor do the prices of slots $k\le a_i$. Then we know that $\hat{j}$ would not choose a slot at or below $a_i$ when $i$ misreports her value. In the latter case, let $i'$ be the highest VM between $i$ and $\hat{j}$ in the truthful setting. Since $\hat{j}$ chooses the slot above $i'$ in the truthful setting but chooses a slot at or below $a_{i}$ in the untruthful setting, we can get that the price of the slot immediately above $i'$ in the round of allocating $i$ is derived from UM $\hat{j}$, making it preferred by $i$, compared with lower slots. This implies that $i$ would finally be allocated a slot above $i'$, yielding a negative utility. Therefore, we can conclude that a VM could never enjoy a higher utility by misreporting her type.
\end{proof}

\subsection{Approximation Ratio on LSW}

\begin{thm}
MPR achieves an approximation ratio of at most $2$ on LSW, \emph{i.e.}, $Wel_{MPR}\ge \frac{1}{2} Wel_{OPT}$, where $Wel_{OPT} = max_{\Pi}Wel(\Pi)$.
\end{thm}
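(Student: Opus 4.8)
The plan is to exploit the fact that MPR and the LSW-optimal allocation select the \emph{same} set of winners, so that the whole problem reduces to comparing two permutations of one multiset of values against the same increasing CTR sequence. Since LSW is maximized by matching the top $K$ values to the $K$ slots in sorted order, the optimal allocation assigns the top $K$ bidders (which is exactly the set $N$ allocated by MPR) to the slots in increasing order of value. Writing $a_\ell$ for the slot that MPR gives bidder $\ell$ and $a_\ell^{*}$ for the slot the optimum gives her, the target $Wel_{MPR}\ge \frac12 Wel_{OPT}$ becomes $\sum_{\ell} v_\ell x_{a_\ell^{*}} \le 2\sum_{\ell} v_\ell x_{a_\ell}$, and I would split the left-hand sum over VMs and UMs.

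First I would pin down how each class moves between the two allocations. Using Lemma \ref{same} and Lemma \ref{between} I would show that every VM sits \emph{weakly higher} in MPR than in the optimum (i.e.\ $a_j\ge a_j^{*}$ for a VM $j$) while every UM sits \emph{weakly higher} in the optimum (i.e.\ $a_i^{*}\ge a_i$ for a UM $i$). The reason is purely combinatorial: a VM lying below a UM in MPR has strictly smaller value (Lemma \ref{between}), so the VMs below a given UM are among the lower-value bidders and the UMs below a given VM are exactly the lower-value ones; counting bidders below each bidder turns these containments into the two displacement inequalities. The VM contribution is then immediate: $a_j\ge a_j^{*}$ gives $v_j x_{a_j^{*}}\le v_j x_{a_j}$, so $\sum_{\text{VM}} v_j x_{a_j^{*}}\le \sum_{\text{VM}} v_j x_{a_j}$, i.e.\ the VM part of the optimum is already dominated by the VM part of $Wel_{MPR}$.

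The crux is the UM part, where the optimum can genuinely beat MPR because a high-value UM may have been pushed down below lower-value VMs. Here I would invoke Lemma \ref{marginal2}: for a UM $i$ placed at $a_i$ one has $\Delta(a_i,k')\ge v_i$ for every $k'>a_i$, which rewritten at the \emph{final} prices says $x_{k'}\bigl(v_i-p^{(k')}\bigr)\le x_{a_i}\bigl(v_i-p^{(a_i)}\bigr)$ --- the UM's utility only weakens at higher slots. Applying this with $k'=a_i^{*}\ge a_i$ yields $v_i x_{a_i^{*}} \le p^{(a_i^{*})}x_{a_i^{*}} - p^{(a_i)}x_{a_i} + v_i x_{a_i} \le p^{(a_i^{*})}x_{a_i^{*}} + v_i x_{a_i}$. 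Summing over UMs, the displaced value $v_i x_{a_i^{*}}$ is charged to $v_i x_{a_i}$ (the UM's own MPR term) plus the price--revenue $p^{(a_i^{*})}x_{a_i^{*}}$ at its optimal slot; since the slots $a_i^{*}$ are distinct, these revenue terms sum to at most the full MPR revenue $\sum_k p^{(k)} x_k$.

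Finally I would combine the pieces. Adding the VM and UM bounds gives $Wel_{OPT}\le Wel_{MPR} + \sum_k p^{(k)}x_k$, and individual rationality (the already-proved IR result, $p^{(k)}=p_{\pi_k}\le v_{\pi_k}$) caps the revenue by $Wel_{MPR}$, delivering $Wel_{OPT}\le 2\,Wel_{MPR}$. The main obstacle is precisely this UM/pushed-down term: a naive slot-by-slot comparison fails because a displaced UM can lose a large fraction of its CTR, and the resolution is to trade that CTR loss against the price increase guaranteed by Lemma \ref{marginal2} and then recognize the resulting price mass as revenue, which IR bounds by welfare. I would also double-check the boundary conventions (the dummy bidder at slot $0$, and bidders with no lower bidder of the opposite class) to confirm that the displacement counts and the marginal-payment inequality remain valid at the extreme slots.
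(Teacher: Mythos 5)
Your proposal is correct, but it takes a genuinely different route from the paper's proof. The paper runs an exchange argument: it identifies the inversions (VMs sitting above higher-valued UMs), moves each such VM $i$ down past its set $S^i$ of higher-valued UMs toward the optimal order, and bounds each swap's welfare change $D^i$ by $v_i x_{a_i}$ using the VM's individual rationality --- her price dominates the VCG-style term $\hat{p}^{S_i}$ accumulated from the UMs below her --- so that $Wel_{OPT}-Wel_{MPR}=\sum_{S^i\neq\emptyset}D^i\le \sum_{S^i\neq\emptyset}v_i x_{a_i}\le Wel_{MPR}$. You instead establish the two displacement inequalities (every VM sits weakly higher in MPR than in the optimum, every UM weakly lower), use Lemma \ref{marginal2} to charge each UM's optimal-slot value $v_i x_{a_i^{*}}$ to its MPR term $v_i x_{a_i}$ plus the slot revenue $p^{(a_i^{*})}x_{a_i^{*}}$, and close with revenue at most welfare via IR. This buys you the cleaner intermediate bound $Wel_{OPT}\le Wel_{MPR}+\sum_{k}p^{(k)}x_k$, which makes the role of revenue explicit, avoids the paper's somewhat delicate iterative bookkeeping (re-deriving $\hat{p}^{S_i}$ after each swap and arguing it only shrinks), and puts Lemma \ref{marginal2} --- which the paper proves but deploys only for incentive compatibility --- to work for the welfare bound. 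What the paper's swap argument buys in exchange is locality: the loss is charged only to the inverted VMs, so it degrades gracefully (e.g., it gives ratio $1$ outright when no inversions occur), whereas your bound as written always pays the full revenue term; both coincide at the worst-case factor of $2$.

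One wording slip is worth fixing: ``the UMs below a given VM are exactly the lower-value ones'' is false --- the reverse inclusion fails precisely at the inversions that cause the welfare loss in the first place. What Lemma \ref{between} actually gives, and all your counting argument needs, is the one-directional containment that every lower-valued UM lies below the VM (together with Lemma \ref{same} for lower-valued VMs). With only that direction, the displacement inequality $a_j\ge a_j^{*}$ for VMs, and hence the rest of your proof, goes through unchanged.
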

\begin{proof}
By Lemma \ref{same} and \ref{between}, we have that the only probability for the event of $Wel_{MPR}<Wel_{OPT}$ is that, some VMs with lower values are allocated higher slots than UMs with higher values. 
We now consider each such VM $i$ in MPR in a bottom-up sequence and prove that changing the slot of $i$ with lower UMs to fit the optimal outcome leads to a difference in LSW no more than that of the welfare achieved by $i$ in MPR. Then the overall difference in LSW of MPR and the optimal outcome would be no more than that of MPR, resulting in an approximation ratio of 2.

First, let $i$ be the lowest VM in MPR who locates at a higher slot than some UMs with higher values than her.
We denote $S^{i}$ as the set of UMs below $i$ with higher values and index them by $j_c$ with $1\le c\le |S^{i}|$ in a bottom-up manner. We also use $k_c$ to denote the slot of $j_c$ in MPR. By Lemma \ref{same}, we know that there are no VMs between $i$ and UMs in $S^{i}$, otherwise, $i$ would not be the lowest one of interest. Thus we have $k_{c+1} = k_{c}+1$ for any $1\le c\le |S^{i}|-1$, and $a_i = k_{|S^{i}|}+1$. Next, we denote $\hat{p}^{S_i}$ as the payment purely induced from the UMs in $S^i$, that is
\begin{equation}
\label{eq21}
    \hat{p}^{S_i} = \frac{1}{x_{a_i}}\left(
    v_{k_{|S^{i}|}}(x_{i}-x_{k_{|S^{i}|}})+\sum_{c=1}^{|S^{i}|-1} v_{j_c}(x_{k_{c+1}}-x_{k_{c}})\right).
\end{equation}
Note that $\hat{p}^{S_i}$ would be no more than $p_i$ by equation (\ref{eq15}), then by the IR property of MPR, we have
\begin{equation}
\label{eq22}
    v_i x_{a_i}\ge p_i x_{a_i} \ge \hat{p}^{S_i} x_{a_i}.
\end{equation}
Next, we change the allocation of $i$ and UMs in $S^i$ towards the optimal outcome and denote $D^{i}$ as the difference in LSW after and before the change. Then we have that
\begin{equation}
\label{eq14}
\resizebox{0.429\textwidth}{!}{$
\begin{split}
    D^{i} &= v_{j_{|S^{i}|}}(x_{i}-x_{k_{|S^{i}|}}) + \left(\sum_{c=1}^{|S^{i}|-1} v_{j_c}(x_{k_{c+1}}-x_{k_{c}})\right) \\
    &\quad - v_{i}(x_{a_{i}}-x_{k_{1}})\\
    &\le v_i x_{a_i},
\end{split}
$}
\end{equation}
where the equality holds because after the change, all UMs in $S^i$ are moved up one slot, and $i$ is moved down from $a_i$ to $k_1$. The inequality comes from equations (\ref{eq21}) and (\ref{eq22}).

This way, we have moved $i$ and corresponding UMs towards the optimal outcome and have proved that $D^{i}\le v_i x_{a_i}$ for the lowest VM. In the next step, we set $i$ as the new lowest VM with non-empty $S^i$, and again, we denote $k_c$ as the assigned slots of UMs in the corresponding set of $S^i$ in the current allocation. Then one can observe that the newly induced $\hat{p}^{S_i}$ is no more than that induced by the allocation in MPR, and hence, all of the above analyses still hold. So we can repeat this process for all such VMs.

Next, the difference in LSW between MPR and the optimal outcome is given by
\begin{equation}
\resizebox{0.48\textwidth}{!}{$
Wel_{OPT}-Wel_{MPR} = \sum\limits_{S^i\ne \emptyset} D^i \le \sum\limits_{S^i\ne \emptyset} v_ix_{a_i}\le Wel_{MPR}.
$}
\end{equation}
The first equation comes from that we consider VMs in a bottom-up order, so that the difference induced by each VM could be computed separately. Finally, we obtain that $Wel_{MPR}\ge \frac{1}{2}Wel_{OPT}$.
\end{proof}

\begin{figure}
    \centering
    \includegraphics[width = .3\textwidth]{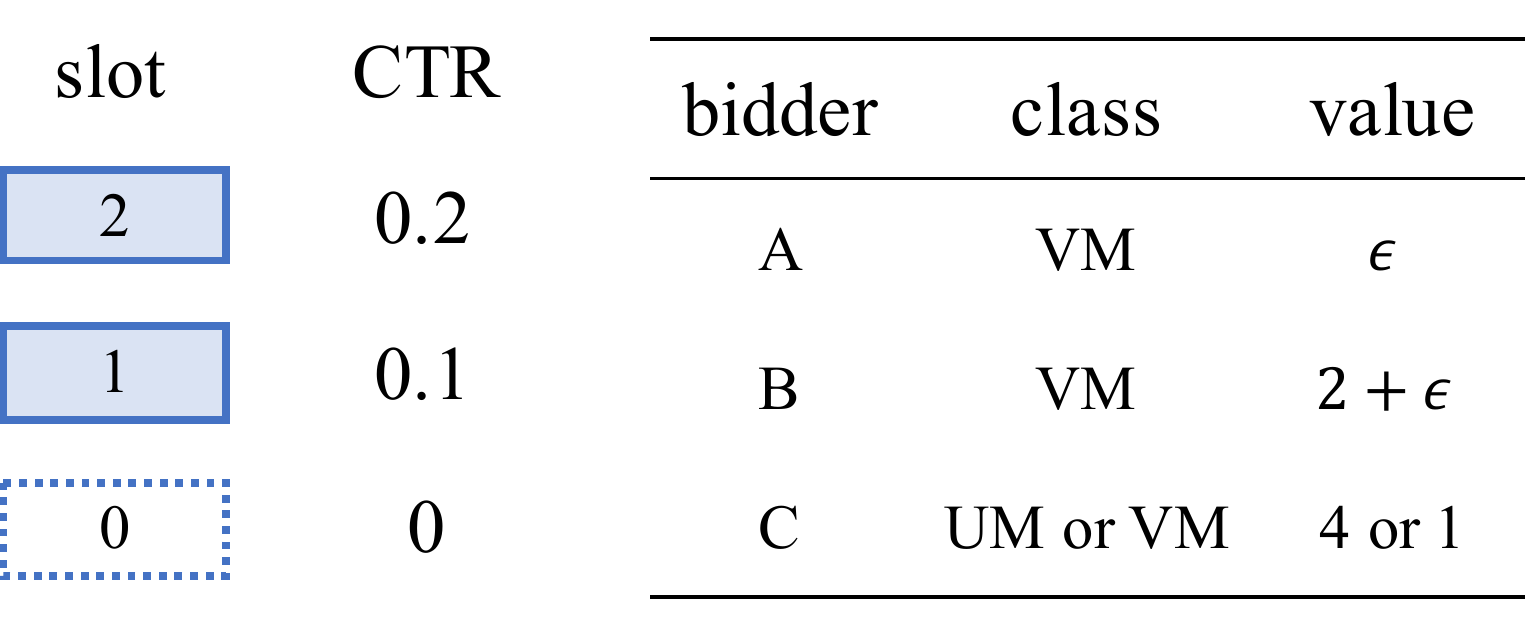}
    \caption{A counter-example for Theorem \ref{lower-bound}.}
    \label{fig:counter}
\end{figure}

\begin{thm}
\label{lower-bound}
No mechanisms that are IC, IR, and robust could guarantee an approximation ratio lower than $\frac{5}{4}$ in terms of LSW.
\end{thm}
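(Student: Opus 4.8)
The plan is to prove the lower bound by exhibiting a single small instance on which \emph{every} IC, IR, and robust mechanism is forced into an allocation that loses exactly a $\frac15$ fraction of the optimal LSW. The conceptual heart of the argument is that on this instance the welfare-optimal (i.e.\ value-sorted) allocation is made \emph{infeasible} by the incentive constraints: a value maximizer placed at the bottom can profitably deviate by reporting the \emph{class} $UM$ together with an inflated value, and robustness then forces the resulting all-$UM$ configuration to be priced by VCG, which hands the deviator the top slot at a price it can just afford. Since the only remaining allocation has LSW equal to $\frac45$ of the optimum, the ratio of any valid mechanism on this instance is at least $\frac54$, matching the counterexample depicted in Fig.~\ref{fig:counter}.

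Concretely, I would take $K=2$ slots with $x_1=1$ and $x_2=2$, and three bidders: a UM with value $2$, a VM with value $1$, and a ``floor'' UM with value $0$ (needed since $n>K$). First I would record that $Wel_{OPT}=2x_2+1\cdot x_1=5$, attained only by the value-sorted allocation that places the UM of value $2$ on the top slot and the VM of value $1$ on the bottom slot; the sole alternative places the VM on top and gives $Wel=1\cdot x_2+2x_1=4$. The main step is then to rule out the value-sorted allocation under IC. In that allocation the VM sits at the bottom with objective $1\cdot x_1=1$. I would let this VM misreport the type $(\hat v,UM)$ for any $\hat v>2$; the reported profile is now all-$UM$, so by robustness the mechanism must coincide with VCG, which value-sorts and charges the top bidder the externality $\tfrac1{x_2}\big(0\cdot x_1+2(x_2-x_1)\big)=1$. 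Thus the deviating VM obtains the top slot at price $1\le 1$, raising its true objective to $1\cdot x_2=2>1$ while respecting $p\le v$, which contradicts IC. Hence no IC, IR, robust mechanism can use the value-sorted allocation, so it must settle for $Wel=4$ and incur ratio $\frac54$.

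The step I expect to be the main obstacle is the calibration that makes the deviation exactly \emph{binding}. The VCG price the deviator pays equals $\tfrac1{x_2}\big(f\cdot x_1+a(x_2-x_1)\big)$, where $a$ is the UM's value and $f$ the floor value, and affordability for the VM of value $b$ requires this to be at most $b$. Getting a clean contradiction forces $f=0$ (any positive floor value pushes the price strictly above $b$ and destroys the deviation), and pinning the worst ratio requires optimizing over the free parameters: writing $x_1=1,\ x_2=s,\ b=1$ and pushing $a$ up to the affordability boundary $a=\tfrac{s}{s-1}$, the forced ratio becomes $\tfrac{as+b}{bs+a}=1+\tfrac1s-\tfrac1{s^2}$, which is maximized at $s=2$ with value $\frac54$. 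A secondary subtlety is that, because classes are private, one must argue over the full two-parameter deviation space $(\hat v,\hat\tau)$ and confirm that the class-changing deviation above is the binding one; I would check that value-only deviations and deviations that keep the class $VM$ cannot recover the lost welfare, so that $\frac54$ is genuinely unavoidable.
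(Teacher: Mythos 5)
Your proof is correct, and it reaches the $\frac{5}{4}$ bound by a genuinely different route than the paper. The paper also uses a two-slot counterexample (CTRs $0.1$ and $0.2$, see Fig.~\ref{fig:counter}), but its contradiction lives on the payment side: it fixes two VMs with values $\epsilon$ and $2+\epsilon$, lets a third bidder range over four types $(4,UM),(4,VM),(1,UM),(1,VM)$, argues that any ratio below $\frac54$ pins down the allocation in every case, that the payment must be class-independent (a single $p_h$ for value $4$ and $p_l$ for value $1$, else a class misreport pays off), that UM value-IC forces $2p_h-p_l\le 4$, and finally that robustness on the two all-VM profiles forces GSP prices $p_h=2+\epsilon$, $p_l=\epsilon$, giving $2p_h-p_l=4+\epsilon$ --- a contradiction. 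You instead exploit the other half of robustness: your true VM deviates to an all-UM \emph{report} profile, where robustness forces the VCG outcome, whose top-slot price is exactly affordable, so the LSW-optimal allocation is itself IC-infeasible; this needs only one true profile and one deviation (plus the trivial observation that every other allocation has LSW at most $4$). So the paper plays the GSP/all-VM side of robustness against UM value-IC, while you play the VCG/all-UM side against a VM's joint class-and-value misreport; yours is shorter and more self-contained, and your calibration showing $s=2$ maximizes $1+\frac1s-\frac1{s^2}$ at $\frac54$ is a nice extra the paper lacks. Two caveats. First, your construction genuinely needs the floor value to be $0$, and then the deviation price equals the VM's value only weakly ($p=1\le v=1$); the model never forbids a zero value, but if one insists on strictly positive values, perturb to UM value $2-2\eta$ and floor value $\eta$, so the deviation price becomes $1-\tfrac{\eta}{2}<1$ and the forced ratio $\frac{5-4\eta}{4-2\eta}\to\frac54$, which still kills any guarantee below $\frac54$. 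Second, when writing this up you should state explicitly that robustness constrains the mechanism as a function of \emph{reported} profiles (the mechanism cannot distinguish your VM's deviation from a truthful all-UM instance), which is what licenses invoking VCG at the deviation --- the same reading the paper uses for its all-VM cases. Your closing worry about sweeping the full deviation space is unnecessary: for an impossibility argument, exhibiting one profitable deviation suffices.
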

\begin{proof}
We use a counter-example to prove this theorem. Assume an IC, IR and robust mechanism $\widetilde{M}$ achieves an approximation ratio lower than $\frac{5}{4}$ on LSW. As illustrated in Fig.~\ref{fig:counter}, let there be two slots with CTRs of $0.1$ and $0.2$ (and a dummy slot with CTR of $0$). There are three bidders A, B, and C. The types of A and B are $(\epsilon, VM)$ and $(2+\epsilon, VM)$, respectively, where $\epsilon$ is a sufficiently small positive number. We consider four cases for the type of bidder C: \textbf{Case 1)} $(4, UM)$; \textbf{Case 2)} $(4, VM)$; \textbf{Case 3)} $(1, UM)$; \textbf{Case 4)} $(1, VM)$. Since the approximation ratio on LSW is lower than $\frac{5}{4}$ in $\widetilde{M}$, the allocation outcomes are certain: in the first two cases, bidders A, B, and C get slot $0$, $1$, $2$, respectively; in the last two cases, bidders A, B, and C get slot $0$, $2$, $1$, respectively. Otherwise, one can check that any allocation outcome would result in an approximation ratio higher than $\frac{5}{4}$. Then, as discussed in previous sections, the payments of bidder C in Case 1 and Case 2 should be the same, denoted as $p_h$; otherwise, C may misreport her class. This claim also holds for Case 3 and Case 4, where the payment is denoted as $p_l$. 
Next, in Case 1, if C misreports her value as $1$, the outcome would be the same as in Case 3, and her utility should be no more than truthfully reporting $4$ by the requirement of IC. Hence we have 
\begin{equation}
    0.2(4-p_h) \ge 0.1(4-p_l),
\end{equation}
which further implies that 
\begin{equation}
\label{requirement}
    2p_h-p_l \le 4.
\end{equation}
It is noteworthy that, in Case 2 and Case 4, all bidders are VMs, thus by the requirement of robustness, we can use GSP to compute the payments, \emph{i.e.}, $p_h=2+\epsilon$ and $p_l=\epsilon$; hence we have $2p_h-p_l = 4+\epsilon > 4$, which contradicts with equation (\ref{requirement}).
This concludes our proof.
\end{proof}

\section{Related Work}
The study on VMs stems from the prosperity of auto-bidding techniques in recent years, where bidders only specify their targeted ROI and budget constraints \cite{zhang2014optimal,aggarwal2019autobidding,he2021unified}. This new pattern leads researchers to devise more practical models and mechanisms for auto-bidding advertisers. 
\citeauthor{golrezaei2021auction} \shortcite{golrezaei2021auction} and \citeauthor{balseiro2021landscape} \shortcite{balseiro2021landscape} considered \emph{ex-ante} ROI constraints while \cite{cavallo2017sponsored} considered \emph{ex-post} ROI constraints. They developed fruitful understandings of the characterizations of IC and revenue-maximizing mechanism design. However, when the allocation value and the target ROI are both private, it is hard to design optimal mechanisms due to the fundamental difficulty of multi-parameter mechanism design.
Therefore, two independent models of VMs were proposed in \cite{wilkens2017gsp} and \cite{fadaei2017truthfulness}, to characterize bidders with relatively high ROI constraints in a light way. Both of these two works considered that bidders aim to maximize their  allocation value, while the former one further assumed that a bidder set payment minimization as her second-order objective, making GSP a truthful mechanism. Our work inherits the model in the former work and extends it to a more practical  environment where VMs and UMs coexist.


Another stream of research related to ours is the mixture of VCG and GSP mechanisms, where the main goal is to transition the existing GSP mechanism into VCG mechanism to adapt to modern complex advertising environments. 
The work in \cite{bachrach2016mechanism} proposed a transitional mechanism, which is similar to ours. However, they considered all bidders as traditional utility maximizers, while some are adaptive to VCG and some are non-adaptive and still use the GSP bids. Moreover, they took the classes of ``adaptive" or ``non-adaptive" bidders as public information, while we consider the private class information.
\citeauthor{hummel2018hybrid} \shortcite{hummel2018hybrid} further took the externality into account, 
and they aimed to guarantee that VCG bidders would bid truthfully, and GSP bidders could not obtain the same allocation at a lower price by misreporting their bids. This model for GSP bidders is different from our concept of VMs, and they also took the classes of bidders as public information.

\section{Conclusion}
In this work, we have investigated mechanism design for mixed environments with both UMs and VMs. 
This work sheds light on future studies on private ROI constraints and also leaves several open problems. The foremost one is to close the gap between the approximation ratio's lower and upper bound.
We conjecture that MPR is in some way the right mechanism for this problem, and it would be interesting to make this more formal. Moreover, we will generalize our proposed mechanism for bidders with various types of ROI constraints.

\section*{Acknowledgements}
This work was supported in part by National Key R\&D Program of China No. 2021YFF0900800, in part by China NSF grant No. 61972230, 62025204, 62072303, 61972254, 62132018, 61902248, and 91846205, in part by Shanghai Science and Technology fund 20PJ1407900, in part by NSFShandong No. ZR2021LZH006, in part by Shandong Provincial Major Scientific and Technological Innovation Project No. 2021CXGC010108, in part by Alibaba Group through Alibaba Innovation Research Program, and in part by Tencent Rhino Bird Key Research Project. The opinions, findings, conclusions, and recommendations expressed in this paper are those of the authors and do not necessarily reflect the views of the funding agencies or the government.

\bibliography{aaai23}

\begin{thebibliography}{24}
\providecommand{\natexlab}[1]{#1}

\bibitem[{Aggarwal, Badanidiyuru, and Mehta(2019)}]{aggarwal2019autobidding}
Aggarwal, G.; Badanidiyuru, A.; and Mehta, A. 2019.
\newblock Autobidding with constraints.
\newblock In \emph{International Conference on Web and Internet Economics},
  17--30.

\bibitem[{Bachrach et~al.(2016)Bachrach, Ceppi, Kash, Key, and
  Khani}]{bachrach2016mechanism}
Bachrach, Y.; Ceppi, S.; Kash, I.~A.; Key, P.; and Khani, M.~R. 2016.
\newblock Mechanism design for mixed bidders.
\newblock In \emph{Proceedings of the 25th International Conference on World
  Wide Web}, 215--225.

\bibitem[{Balseiro et~al.(2021)Balseiro, Deng, Mao, Mirrokni, and
  Zuo}]{balseiro2021landscape}
Balseiro, S.; Deng, Y.; Mao, J.; Mirrokni, V.; and Zuo, S. 2021.
\newblock The Landscape of Auto-Bidding Auctions: Value Versus Utility
  Maximization.
\newblock In \emph{Proceedings of the 22nd ACM Conference on Economics and
  Computation}, 132--133.

\bibitem[{Cavallo et~al.(2017)Cavallo, Krishnamurthy, Sviridenko, and
  Wilkens}]{cavallo2017sponsored}
Cavallo, R.; Krishnamurthy, P.; Sviridenko, M.; and Wilkens, C.~A. 2017.
\newblock Sponsored search auctions with rich ads.
\newblock In \emph{Proceedings of the 26th International Conference on World
  Wide Web}, 43--51.

\bibitem[{Clarke(1971)}]{clarke1971multipart}
Clarke, E.~H. 1971.
\newblock Multipart pricing of public goods.
\newblock \emph{Public choice}, 17--33.

\bibitem[{Deng et~al.(2021)Deng, Mao, Mirrokni, and Zuo}]{deng2021towards}
Deng, Y.; Mao, J.; Mirrokni, V.; and Zuo, S. 2021.
\newblock Towards Efficient Auctions in an Auto-bidding World.
\newblock In \emph{Proceedings of the Web Conference 2021}, 3965--3973.

\bibitem[{Dobzinski and Paes~Leme(2014)}]{dobzinski2014efficiency}
Dobzinski, S.; and Paes~Leme, R. 2014.
\newblock Efficiency guarantees in auctions with budgets.
\newblock In \emph{International Colloquium on Automata, Languages, and
  Programming}, 392--404.

\bibitem[{Edelman, Ostrovsky, and Schwarz(2007)}]{edelman2007internet}
Edelman, B.; Ostrovsky, M.; and Schwarz, M. 2007.
\newblock Internet advertising and the generalized second-price auction:
  Selling billions of dollars worth of keywords.
\newblock \emph{American economic review}, 97(1): 242--259.

\bibitem[{Fadaei and Bichler(2017)}]{fadaei2017truthfulness}
Fadaei, S.; and Bichler, M. 2017.
\newblock Truthfulness with value-maximizing bidders: On the limits of
  approximation in combinatorial markets.
\newblock \emph{European Journal of Operational Research}, 260(2): 767--777.

\bibitem[{Golrezaei, Lobel, and Paes~Leme(2021)}]{golrezaei2021auction}
Golrezaei, N.; Lobel, I.; and Paes~Leme, R. 2021.
\newblock Auction design for {ROI}-constrained buyers.
\newblock In \emph{Proceedings of the Web Conference 2021}, 3941--3952.

\bibitem[{Groves(1973)}]{groves1973incentives}
Groves, T. 1973.
\newblock Incentives in teams.
\newblock \emph{Econometrica: Journal of the Econometric Society}, 617--631.

\bibitem[{He et~al.(2021)He, Chen, Wu, Pan, Tan, Yu, Xu, and
  Zhu}]{he2021unified}
He, Y.; Chen, X.; Wu, D.; Pan, J.; Tan, Q.; Yu, C.; Xu, J.; and Zhu, X. 2021.
\newblock A Unified Solution to Constrained Bidding in Online Display
  Advertising.
\newblock In \emph{Proceedings of the 27th ACM SIGKDD Conference on Knowledge
  Discovery \& Data Mining}, 2993--3001.

\bibitem[{Hummel(2018)}]{hummel2018hybrid}
Hummel, P. 2018.
\newblock Hybrid mechanisms for Vickrey--Clarke--Groves and generalized
  second-price bids.
\newblock \emph{International Journal of Game Theory}, 47(1): 331--350.

\bibitem[{Li et~al.(2020)Li, Yang, Sun, Ji, Jiang, Han, and
  Hao}]{li2020incentive}
Li, B.; Yang, X.; Sun, D.; Ji, Z.; Jiang, Z.; Han, C.; and Hao, D. 2020.
\newblock Incentive Mechanism Design for {ROI}-constrained Auto-bidding.
\newblock \emph{arXiv preprint arXiv:2012.02652}.

\bibitem[{Niazadeh et~al.(2022)Niazadeh, Hartline, Immorlica, Khani, and
  Lucier}]{niazadeh2022fast}
Niazadeh, R.; Hartline, J.; Immorlica, N.; Khani, M.~R.; and Lucier, B. 2022.
\newblock Fast Core Pricing for Rich Advertising Auctions.
\newblock \emph{Operations Research}, 70(1): 223--240.

\bibitem[{Nisan et~al.(2008)Nisan, Roughgarden, Tardos, and
  Vazirani}]{noam2008algorithmic}
Nisan, N.; Roughgarden, T.; Tardos, E.; and Vazirani, V.~V. 2008.
\newblock \emph{Algorithmic Game Theory}.
\newblock Cambridge University Press.

\bibitem[{Szymanski and Lee(2006)}]{szymanski2006impact}
Szymanski, B.~K.; and Lee, J.-S. 2006.
\newblock Impact of {ROI} on bidding and revenue in sponsored search
  advertisement auctions.
\newblock In \emph{the Second Workshop on Sponsored Search Auctions}, volume~1,
  1--8.

\bibitem[{Varian(2007)}]{varian2007position}
Varian, H.~R. 2007.
\newblock Position auctions.
\newblock \emph{International Journal of Industrial Organization}, 25(6):
  1163--1178.

\bibitem[{Vickrey(1961)}]{vickrey1961counterspeculation}
Vickrey, W. 1961.
\newblock Counterspeculation, auctions, and competitive sealed tenders.
\newblock \emph{The Journal of finance}, 16(1): 8--37.

\bibitem[{Wang et~al.(2021)Wang, Liu, Zheng, Zhang, Xu, Yu, and
  Wu}]{wang2021designing}
Wang, Y.; Liu, X.; Zheng, Z.; Zhang, Z.; Xu, M.; Yu, C.; and Wu, F. 2021.
\newblock On Designing a Two-stage Auction for Online Advertising.
\newblock \emph{arXiv preprint arXiv:2111.05555}.

\bibitem[{Wilkens, Cavallo, and Niazadeh(2017)}]{wilkens2017gsp}
Wilkens, C.~A.; Cavallo, R.; and Niazadeh, R. 2017.
\newblock {GSP}: the cinderella of mechanism design.
\newblock In \emph{Proceedings of the 26th International Conference on World
  Wide Web}, 25--32.

\bibitem[{Wilkens et~al.(2016)Wilkens, Cavallo, Niazadeh, and
  Taggart}]{wilkens2016mechanism}
Wilkens, C.~A.; Cavallo, R.; Niazadeh, R.; and Taggart, S. 2016.
\newblock Mechanism design for value maximizers.
\newblock \emph{arXiv preprint arXiv:1607.04362}.

\bibitem[{Yang et~al.(2019)Yang, Li, Wang, Wu, Tan, Xu, and Gai}]{yang2019bid}
Yang, X.; Li, Y.; Wang, H.; Wu, D.; Tan, Q.; Xu, J.; and Gai, K. 2019.
\newblock Bid optimization by multivariable control in display advertising.
\newblock In \emph{Proceedings of the 25th ACM SIGKDD International Conference
  on Knowledge Discovery \& Data Mining}, 1966--1974.

\bibitem[{Zhang, Yuan, and Wang(2014)}]{zhang2014optimal}
Zhang, W.; Yuan, S.; and Wang, J. 2014.
\newblock Optimal real-time bidding for display advertising.
\newblock In \emph{Proceedings of the 20th ACM SIGKDD international conference
  on Knowledge Discovery \& Data Mining}, 1077--1086.

\end{thebibliography}
\end{document}